\documentclass[a4paper, amsfonts, amssymb, amsmath, reprint, showkeys, nofootinbib, twoside,notitlepage,onecolumn]{revtex4-1}
\def\prd{{Phys.~Rev.~D}}        % Physical Review D
\def\pre{{Phys.~Rev.~E}}        % Physical Review E
\usepackage{amsmath,amstext}
\usepackage[T1]{fontenc}
\usepackage{amssymb}
\usepackage{graphicx}
\usepackage{ae,aecompl}

\DeclareFontFamily{OT1}{pzc}{}
\DeclareFontShape{OT1}{pzc}{m}{it}{<-> s * [1.10] pzcmi7t}{}
\DeclareMathAlphabet{\mathpzc}{OT1}{pzc}{m}{it}

\usepackage{hyperref}
\usepackage{amsmath}
\usepackage{amssymb}
\usepackage{mathtools}
\usepackage{bm}
\usepackage{cleveref}
\usepackage{tensor}
\usepackage{braket}
\usepackage{enumitem}
\usepackage{mhchem}
\usepackage{amsthm}
\usepackage{nccmath}
\usepackage{mathrsfs}
\usepackage{color}

\newcommand{\E}{{\mathbb{E}}}
\newcommand{\K}{{\mathbb{K}}}
\newcommand{\G}{{\mathbb{G}}}

\newcommand{\s}{{\sigma \!\! \! \sigma}}

\def\be{\begin{equation}}
\def\ee{\end{equation}}
\def\beq{\begin{eqnarray}}
\def\eeq{\end{eqnarray}}

\theoremstyle{definition}

\theoremstyle{theorem}
\newtheorem{theorem}{Theorem}

\begin{document}
\title{Relativistic transport near moving interfaces}
\author{L.~Gavassino}
\affiliation{Department of Applied Mathematics and Theoretical Physics, University of Cambridge, Wilberforce Road, Cambridge CB3 0WA, United Kingdom}

\begin{abstract}
We study linear disturbances localized near planar surfaces moving at constant velocity $v$ in relativistic media. Depending on the physical setting, the surface may represent a moving obstacle, a thermal boundary, or an external source, providing a unified description of boundary layers, wakes, and the asymptotic tails of shock waves. The central result is a propagator representation of the interface solution that yields a geometric characterization of these phenomena. Using a Laplace-transform formulation, we show that the solution is a superposition of modes with purely imaginary frequency and wavenumber. For a given interface velocity, the admissible modes are selected by the line $i\omega=vik$ in the $\{i\omega,ik\}$ plane. As $v$ varies, this line sweeps across the spectrum, providing a unified geometric description of interface-localized solutions for arbitrary interface velocities. We illustrate the formalism with applications to relativistic hydrodynamics and kinetic theory.
\end{abstract} 
\maketitle

\section{Introduction}
\vspace{-0.2cm}

We study the following linear boundary-value problem, which we shall refer to as \textit{interface problem} (see Fig. \ref{fig:TheProblem}): 
\vspace{-0.1cm}

\begin{quote}
Let $\Psi(x^\mu)$ denote the dynamical variables of the system, linearized about a homogeneous background at rest. We seek disturbances of the form $\Psi(x-vt)$, defined in the half-space $x-vt\ge0$, that remain localized near the moving surface $x=vt$, in the sense that $\Psi(x-vt)$ grows at most polynomially as $x-vt\to+\infty$.
\end{quote}
\vspace{-0.1cm}

This problem arises naturally in a wide range of physical settings \cite{case1960elementary,zeldovich1967shock,mihalas_book,Prinja2010,schlichting2017boundary,NovakWithersObstacles:2018pnv,Kiselev2019}. The simplest example is the viscous boundary layer, where the stationary wall defines the interface $x=0$, and the tangential velocity grows linearly away from the wall. Another example is provided by a sharply localized electromagnetic or gravitational pulse propagating through a medium. Modeling the pulse as a source localized on the hypersurface $x=-t$, the wake left behind it is described by an interface problem with $v=-1$. Shock-wave asymptotics provide another natural realization of the same problem. Far from the shock front, the linear approximation becomes valid. Taking the hypersurface $x-vt=0$ to define the onset of the linear regime, the asymptotic tail is described by the corresponding linear interface problem.

Interface problems are particularly interesting in relativity, as the interface may move arbitrarily close to the speed of light, making relativistic kinematics and causality central to the structure of the solution. Despite this, systematic studies of how interface solutions depend on the interface velocity appear to be limited to the holographic analysis of \cite{NovakWithersObstacles:2018pnv}. The purpose of this work is to formulate and solve the interface problem in the context of transient hydrodynamics and kinetic theory. We will show that the interface problem admits a remarkably simple geometric characterization: the admissible exponential contributions are selected by the intersections of the linear excitation spectrum with the line $i\omega=vik$ in the real $\{i\omega,ik\}$ plane \cite{GavassinoLorentzianRelxation:2026seq}. The characteristic decay lengths are therefore determined entirely by this geometric construction, which arises naturally from a Laplace-transform representation of the governing equations.

\begin{figure}[h!]
    \centering
    \includegraphics[width=0.3\linewidth]{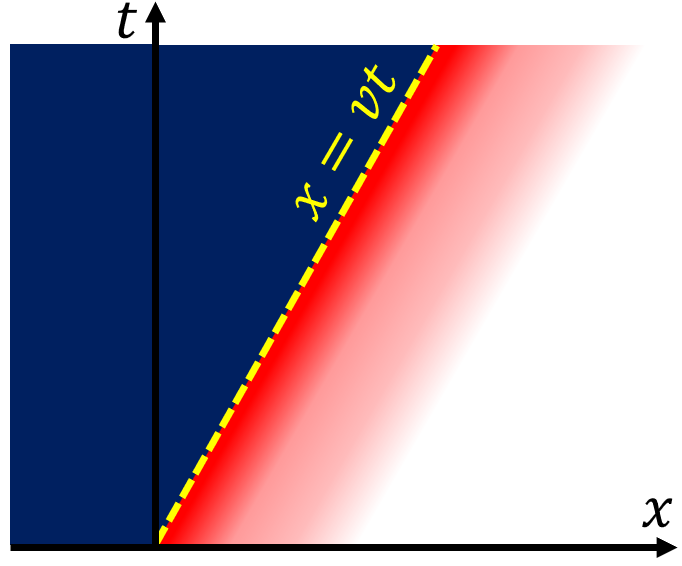}
\caption{Minkowski diagram illustrating the interface problem, shown in the rest frame of the background medium. The solution $\Psi(x-vt)$ (shades of red) is stationary in the frame comoving with the surface $x=vt$. It is defined only in the half-space $x\ge vt$ and is assumed to decay away from the surface (or, in some special cases, to grow polynomially). The complementary region $x<vt$ may contain a material boundary, a nonlinear continuation of the solution, or an external source. Depending on the sign of $v$, the solution describes a wake ($v<0$), a stationary boundary layer ($v=0$), or a bow wave ($v>0$).}
    \label{fig:TheProblem}
\end{figure}

Throughout this article, we adopt the metric signature $(-,+,+,+)$ and work in natural units, $c=\hbar=k_B=1$.

\vspace{-0.3cm}
\section{General solution via Laplace transform}
\vspace{-0.3cm}

In this section, we derive the general solution of the interface problem using the abstract geometric framework developed in \cite{GavassinoLorentzianRelxation:2026seq}. The resulting formalism applies to any linear theory with a purely relaxational excitation spectrum, including kinetic theory, radiative transfer, transient hydrodynamics, and linear viscoelasticity.

\vspace{-0.3cm}
\subsection{Model assumptions}\label{modelassumptions}
\vspace{-0.3cm}

We assume that the linearized perturbation field $\Psi:\text{``Spacetime''}\to\mathcal H$ takes values in a complex Hilbert space $\mathcal H$, endowed with the Onsager inner product $(*,*)$. Here, $\mathcal H$ denotes the space of all nonequilibrium degrees of freedom at a spacetime point, and the inner product is defined so that $\frac12(\Psi,\Psi)$ is the quadratic free-energy density of the perturbation. To simplify the derivations, we assume for now that $\dim\mathcal H<\infty$ (but arbitrarily large). Extensions to infinite-dimensional systems require additional functional-analytic assumptions, and will be discussed later.
If $\Psi$ is even under PT symmetry, the linearized equations of motion can be written in the Boltzmann-like form \cite{GavassinoUniveraalityI2023odx,GavassinoConvergence:2024xwf,GavassinoDisturbing:2026klp}
\begin{equation}\label{Boltzmann}
\partial_t\Psi=-(\s+\E\partial_x)\Psi\,,
\end{equation}
where $\s$ and $\E$ are self-adjoint operators on $\mathcal H$. The operator $\s$ is non-negative definite and models dissipation (in kinetic theory, it is the collision operator), while $\E$ has operator norm $\|\E\|\leq1$ and models propagation (in kinetic theory, it is the velocity operator) \cite{GavassinoLorentzianRelxation:2026seq}.

\vspace{-0.4cm}
\subsection{Formal solution}\label{formalsolutioon}
\vspace{-0.3cm}

Under the ansatz $\Psi=\Psi(x{-}vt)$, equation \eqref{Boltzmann} reduces to the ordinary differential equation $[\s+(\E{-}v)\partial_x]\Psi=0$. Since the solution is defined only for $x{-}vt\geq0$, the Laplace transform is the natural analogue of the Fourier decomposition for the present boundary-value problem. Typically, one requires $\Psi(x)$ to decay to zero at large $x$, corresponding to the system approaching a homogeneous equilibrium state far from the interface. In some special cases, one allows $\Psi$ to grow polynomially away from the interface (as in the Couette or Knudsen boundary layer). Either way,
\begin{equation}\label{laplaceofPsi}
\widetilde{\Psi}(s)=\int_0^\infty e^{-sx}\Psi(x)\,dx
\end{equation}
exists and is holomorphic for $\mathrm{Re}\,s>0$. The transformed equation then becomes
\begin{equation}\label{transformedequt}
[\s+(\E{-}v)s]\widetilde{\Psi}=(\E{-}v)\Psi(0)\,.
\end{equation}

We shall now use the following elementary result.

\begin{theorem}\label{theoInvert}
Let $\s$ and $\E$ be self-adjoint operators on the finite-dimensional Hilbert space $\mathcal H$, with $\s$ non-negative definite. Fix $v\in [-1,1]$, and suppose that $\ker(\s)\cap\ker(\E-v)=\{0\}$. Then $\G(s)=\s+(\E{-}v)s$ is invertible for all $s\in\mathbb C$, except possibly at finitely many points on the real axis.
\end{theorem}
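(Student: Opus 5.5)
We prove the statement by showing two things: that $\det\G(s)$ is not identically zero, and that all its zeros lie on the real axis.

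\textbf{Step 1: zeros are real.} Take $s\in\mathbb C$ and $\psi\in\mathcal H$ with $\G(s)\psi=0$, and write $s=a+ib$. Pair with $\psi$:
\begin{equation*}
(\psi,\s\psi)+s\,(\psi,(\E-v)\psi)=0 .
\end{equation*}
Since $\s$ and $\E$ are self-adjoint, both $(\psi,\s\psi)$ and $(\psi,(\E-v)\psi)$ are real. Taking the imaginary part gives $b\,(\psi,(\E-v)\psi)=0$. Suppose $b\neq0$. Then $(\psi,(\E-v)\psi)=0$, and the real part gives $(\psi,\s\psi)=0$. Because $\s\geq0$, the Cauchy--Schwarz inequality for the semi-inner product $(\cdot,\s\,\cdot)$ yields $\s\psi=0$. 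Substituting back into $\G(s)\psi=0$ gives $s(\E-v)\psi=0$, and since $s\neq0$, $(\E-v)\psi=0$. So $\psi\in\ker\s\cap\ker(\E-v)=\{0\}$. Hence $\G(s)$ is injective, and therefore invertible because $\dim\mathcal H<\infty$, for every non-real $s$.

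\textbf{Step 2: only finitely many real exceptions.} The function $p(s)=\det\G(s)=\det[\s+s(\E-v)]$ is a polynomial in $s$ of degree at most $\dim\mathcal H$. By Step 1, $p(i)\neq0$, so $p$ is not the zero polynomial. It therefore has finitely many roots, and by Step 1 they all lie on the real axis. Away from these roots $\G(s)$ is invertible.

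The only delicate point is the passage in Step 1 from $(\psi,\s\psi)=0$ to $\s\psi=0$. This uses non-negativity of $\s$, either through the square root $\s^{1/2}$ or through Cauchy--Schwarz for a positive semidefinite form. It is also where the hypothesis $\ker\s\cap\ker(\E-v)=\{0\}$ is needed. Without it, a common kernel vector would make $p\equiv0$, and $\G(s)$ would be singular for every $s$.

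Note that $s=0$ may be one of the real exceptions, since $\G(0)=\s$ is singular whenever $\ker\s\neq\{0\}$. This is consistent with the statement, which allows finitely many exceptional real points. The assumption $v\in[-1,1]$ is not used in this argument.
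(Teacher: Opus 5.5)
Your proof is correct and follows essentially the same route as the paper. You pair $\G(s)\psi=0$ with $\psi$ and split into real and imaginary parts to exclude non-real singular points, then observe that $\det\G(s)$ is a nonzero polynomial whose finitely many roots must be real; your explicit justification of $(\psi,\s\psi)=0\Rightarrow\s\psi=0$ via $\s^{1/2}$ or Cauchy--Schwarz fills in a step the paper states without comment.
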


\begin{proof}
Suppose, by contradiction, that there exists $s\in\mathbb C\setminus\mathbb R$ such that $\G(s)$ is not invertible. Then there exists a vector $\Psi\neq0$ such that $\G(s)\Psi=0$. Taking the inner product with $\Psi$ and separating real and imaginary parts, we obtain $(\Psi,\s\Psi)+\mathrm{Re}(s)(\Psi,(\E-v)\Psi)=0$ and $\mathrm{Im}(s)(\Psi,(\E-v)\Psi)=0$. Since $\mathrm{Im}(s)\neq0$, it follows that $(\Psi,(\E-v)\Psi)=0$, and therefore $(\Psi,\s\Psi)=0$. Since $\s$ is non-negative definite, this implies $\s\Psi=0$. Substituting this result into $\G(s)\Psi=0$, and using $s\neq0$, we also obtain $(\E-v)\Psi=0$. This contradicts the assumption that $\ker(\s)\cap\ker(\E-v)=\{0\}$. Hence, $\G(s)$ is invertible away from the real axis.

Finally, $\det\G(s)$ is a polynomial in $s$. It is not identically zero, since $\det\G(s)\neq0$ whenever $s\notin\mathbb R$. It therefore has only finitely many roots, all of which lie on the real axis. These are the points where $\G(s)$ fails to be invertible.
\end{proof}

The above theorem allows us to solve \eqref{transformedequt} away from the singular points and write
 $\widetilde{\Psi}(s)=[\s+(\E-v)s]^{-1}(\E-v)\Psi(0)$.
Note that, although the inverse matrix may be singular at isolated points on the positive real semi-axis, the vector-valued expression $[\s+(\E-v)s]^{-1}(\E-v)\Psi(0)$ must have a removable singularity there. In fact, we are assuming that $\Psi(x)$ grows at most polynomially, and therefore $\widetilde{\Psi}(s)$ is holomorphic throughout the half-plane $\mathrm{Re}\,s>0$.

Inverting the Laplace transform using Bromwich's formula therefore yields
\begin{equation}\label{laplaceformula}
\Psi(x-vt)=\int_{0^+-i\infty}^{0^++i\infty}\frac{ds}{2\pi i}\,
e^{s(x-vt)}\,[\s+(\E-v)s]^{-1}(\E-v)\Psi(0).
\end{equation}
This representation is valid for every choice of boundary data $\Psi(0)$ for which the corresponding solution grows at most polynomially, provided that $\ker(\s)\cap\ker(\E-v)=\{0\}$.

\subsection{The propagator}
\vspace{-0.3cm}

The matrix elements of $\G(s)^{-1}$ grow like powers of $s$. On the other hand, $e^{s(x-vt)}\to0$ as $\mathrm{Re}(s)\to-\infty$ for $x-vt>0$. Since the exponential factor wins, we can close the contour in equation \eqref{laplaceformula} to the left. Then, invoking \cite[\S 7.1.2, Th.~1.3]{Kato_Perturbation_Theory}, we conclude that $\G(s)^{-1}$ is holomorphic everywhere except at the singular points where $\G(s)$ is non-invertible. Hence, we can deform the contour integral into a rectangle $\Gamma$ that surrounds all the singularities of $\G(s)^{-1}$ on the non-positive real axis (see figure \ref{fig:CloseTheLoop}). This allows us to write $\Psi(x{-}vt)=\K(x{-}vt)\Psi(0)$, with
\vspace{-0.1cm}
\begin{equation}\label{Kfund}
\K(x-vt)=\oint_\Gamma\frac{ds}{2\pi i}\,
e^{s(x-vt)}\,\G(s)^{-1}(\E-v)\, ,
\end{equation}
which serves as the ``propagator'' of the theory.
This propagator gives complete information about the interface problem. In fact, we have the following result.

\begin{theorem}\label{theoK}
Let $\K(x-vt)$ be defined by \eqref{Kfund}. Suppose that $\ker(\s)\cap\ker(\E-v)=\{0\}$. Then, the following facts hold:
\begin{itemize}
\item[\textup{(i)}] $\K(x-vt)$ exists for all $x-vt$, and is entire;
\item[\textup{(ii)}] It solves equation \eqref{Boltzmann}, in the sense that $(\partial_t+\s+\E\partial_x)\K=0$;
\item[\textup{(iii)}] $\K(0)$ is a projector. It acts as the identity on the subspace of boundary data whose associated solutions grow at most polynomially, and annihilates both the boundary data associated with exponentially growing modes and the vectors in $\ker(\E-v)$, which are incompatible with the differential equation.
\end{itemize}
\end{theorem}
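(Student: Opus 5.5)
We work throughout with the contour representation \eqref{Kfund}, where $\Gamma$ is a fixed rectangle enclosing all singularities of $\G(s)^{-1}$ lying on the non-positive real axis. By Theorem \ref{theoInvert} there are only finitely many such points, and $\G(s)^{-1}$ is meromorphic with poles only there. We will also use the Jordan (Dunford–Kato) structure of $\G(s)^{-1}$ near each pole.

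\textbf{Part (i).} Set $\xi=x-vt$. For $\xi$ ranging over any compact set in $\mathbb C$, the integrand $e^{s\xi}\G(s)^{-1}(\E-v)$ is continuous on the compact contour $\Gamma$ and entire in $\xi$. Differentiation under the integral sign, or Morera's theorem, then shows that $\K$ is defined for every $\xi$ and is entire. No decay assumption is needed, because $\Gamma$ is bounded. This is exactly why we use the closed contour rather than the Bromwich line.

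\textbf{Part (ii).} Apply the operator to the integrand. We have $\partial_t e^{s\xi}=-vs\,e^{s\xi}$ and $\partial_x e^{s\xi}=s\,e^{s\xi}$, so
\[
(\partial_t+\s+\E\partial_x)\,e^{s\xi}\G(s)^{-1}(\E-v)=e^{s\xi}\,\G(s)\G(s)^{-1}(\E-v)=e^{s\xi}(\E-v).
\]
This is entire in $s$, so its integral over the closed contour $\Gamma$ vanishes by Cauchy's theorem.

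\textbf{Part (iii).} This is the substantive part. First, $\K(0)=\oint_\Gamma \frac{ds}{2\pi i}\G(s)^{-1}(\E-v)$.

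1. \emph{Kernel of $\E-v$.} We clearly have $\K(0)\ker(\E-v)=0$.

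2. \emph{Polynomially bounded data.} Suppose $\Psi(0)$ generates a solution growing at most polynomially. Then \eqref{laplaceformula} holds, and closing the contour to the left gives $\Psi(\xi)=\K(\xi)\Psi(0)$ for $\xi>0$. Continuity at $\xi\to0^+$ gives $\K(0)\Psi(0)=\Psi(0)$. We need care about the Bromwich integral at $\xi=0$ itself, where convergence is only conditional; that is why we use the limit.

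3. \emph{Idempotency.} To show $\K(0)^2=\K(0)$ directly, we use the fact that each column of $\K(\xi)$ is, by (ii), a solution of the ODE of the form $\sum$ (polynomial)$\times e^{s_j\xi}$ with $s_j\le0$. Such a column is polynomially bounded, so by step 2 applied to the data $\K(0)\phi$, we get $\K(0)\K(0)\phi=\K(0)\phi$. One subtlety is that step 2 identifies a solution with its initial value, which requires uniqueness. The ODE $[\s+(\E-v)\partial_\xi]\Psi=0$ may be degenerate when $\E-v$ is singular. We handle this by transforming the particular solution $\K(\xi)\phi$ itself: its Laplace transform satisfies \eqref{transformedequt}, whose unique solution for $\mathrm{Re}\,s>0$ away from the poles then returns $\K(\xi)\phi$ again.

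4. \emph{Exponentially growing data.} Data associated with modes $e^{s_j\xi}$ with $s_j>0$ correspond to poles of $\G^{-1}$ outside $\Gamma$. Using the pole (Laurent) decomposition of $\G(s)^{-1}(\E-v)$, the total contour integral over a large circle splits into contributions from the poles inside $\Gamma$ plus those at $s_j>0$. That total equals the identity on the range complementary to $\ker(\E-v)$, by the large-$s$ expansion $\G(s)^{-1}(\E-v)\approx s^{-1}$ there. The residue projectors belonging to distinct poles annihilate one another's generalized eigenspaces, so $\K(0)$ kills the growing-mode data.

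\textbf{Main obstacle.} We expect the main difficulty to be this last step: making the residue/generalized-eigenspace decomposition of the pencil $\s+(\E-v)s$ rigorous when $\E-v$ is singular. We would first try a Schur-complement reduction onto $\mathrm{ran}(\E-v)$, which converts the pencil into an ordinary matrix $s\mathbb 1+A$ whose Riesz projectors are standard.
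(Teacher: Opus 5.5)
Parts (i), (ii) and step 1 of (iii) agree with the paper. Steps 2--3 are a correct but genuinely different route to the projector property. The paper proves $\K(0)^2=\K(0)$ algebraically, using nested contours and the identity $(z-s)\G(s)^{-1}(\E-v)\G(z)^{-1}(\E-v)=\G(s)^{-1}(\E-v)-\G(z)^{-1}(\E-v)$. It then obtains the identity action by a residue computation on individual (generalized) modes. You instead get both from the Laplace representation, by noting that each column $\K(\xi)\phi$ is itself a polynomially bounded solution. Your route gives two bonuses: $\operatorname{ran}\K(0)$ is exactly the set of boundary data of polynomially bounded solutions, and $\K(\xi)\K(0)=\K(\xi)$. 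Its cost is reliance on Bromwich inversion and on closing the contour. That step is legitimate here: for a genuine polynomially bounded solution, $\G(s)^{-1}(\E-v)\Psi(0)=\widetilde\Psi(s)$ is a rational function tending to zero along vertical lines, hence $O(1/s)$, so Jordan's lemma applies for $\xi>0$.

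The gap is step 4. As written, it asserts that the residues of the pencil at distinct poles annihilate one another's generalized eigenspaces. That is precisely what has to be shown, and you leave it as the ``main obstacle''. The intermediate claim is also off when $\E-v$ is singular. The large-circle integral is the identity on the constraint subspace $\{\Psi:\,(\Phi,\s\Psi)=0\ \forall\,\Phi\in\ker(\E-v)\}$, which in general is not $\operatorname{ran}(\E-v)$; take $\E-v=\mathrm{diag}(1,0)$ and $\s$ with a nonzero off-diagonal entry. Even its $s^{-1}$ asymptotics needs the Schur-complement elimination you only propose. That elimination would work: the $\ker(\E-v)$ block of $\s$ is positive definite by the kernel hypothesis, which leaves $(s+A)^{-1}$ with standard Riesz projectors. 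But none of it is needed.

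The missing idea is a one-line identity. If $\G(\lambda)\Psi=0$, then $\G(s)\Psi=(s-\lambda)(\E-v)\Psi$, so $\G(s)^{-1}(\E-v)\Psi=(s-\lambda)^{-1}\Psi$ on $\Gamma$. The residue theorem then gives $\K(0)\Psi=0$ whenever $\lambda>0$ lies outside $\Gamma$. For generalized modes, your own step-2 argument closes the gap. Suppose $\Psi(\xi)=\sum_n\Psi_{(n)}\xi^n e^{\lambda\xi}$ solves the equation. Its Laplace transform exists for $\mathrm{Re}\,s>\lambda$ and satisfies \eqref{transformedequt}, so $\G(s)^{-1}(\E-v)\Psi(0)=\sum_n n!\,\Psi_{(n)}(s-\lambda)^{-n-1}$ there. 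Because both sides are rational, the same equality holds on $\Gamma$. For $\lambda>0$ the right-hand side is holomorphic inside $\Gamma$, so $\K(0)\Psi(0)=0$.
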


\begin{proof}
\textup{(i)} Since $\Gamma$ is compact and does not intersect the singular set of $\G(s)^{-1}$, the family $\G(s)^{-1}(\E{-}v)$ is holomorphic in a neighborhood of $\Gamma$. Hence, the contour integral exists in operator norm. Moreover, the exponential $e^{s(x-vt)}$ is entire in $x{-}vt$, and its Taylor series converges uniformly for $s\in\Gamma$ on every compact subset of the complex $(x{-}vt)$-plane. The sum may therefore be interchanged with the contour integral, proving that $\K(x{-}vt)$ is entire.

\textup{(ii)} Based on point \textup{(i)}, we can differentiate under the integral sign. Using $\G(s)=\s+s(\E-v)$, we therefore obtain
\begin{equation}
(\partial_t+\s+\E\partial_x)\K
=\oint_\Gamma\frac{ds}{2\pi i}\,
e^{s(x-vt)}(\E-v).
\end{equation}
Since the integrand is entire in $s$, the contour integral vanishes.

\textup{(iii)} Let us first prove that $\K(0)$ is a projector, i.e. $\K(0)^2=\K(0)$. By definition, we have
\begin{equation}\label{squadriamo}
\K(0)^2=\oint_{\Gamma_{\mathrm{out}}}\frac{ds}{2\pi i}\oint_{\Gamma_{\mathrm{in}}}\frac{dz}{2\pi i}\,\G(s)^{-1}(\E-v)\G(z)^{-1}(\E-v)\,,
\end{equation}
where we have used analyticity to deform the contour integrals so that the variable $z$ travels along a curve $\Gamma_{\mathrm{in}}$ contained inside $\Gamma_{\mathrm{out}}$, along which the variable $s$ travels. Now, we observe that $\G(s)$ obeys the resolvent-like identity
\begin{equation}
(z-s)\G(s)^{-1}(\E-v)\G(z)^{-1}(\E-v)
=\G(s)^{-1}(\E-v)-\G(z)^{-1}(\E-v).
\end{equation}
Hence, equation \eqref{squadriamo} becomes
\vspace{-0.2cm}
\begin{equation}\label{squadriamo2}
\K(0)^2=\oint_{\Gamma_{\mathrm{out}}}\frac{ds}{2\pi i}\oint_{\Gamma_{\mathrm{in}}}\frac{dz}{2\pi i}\,
\left[\frac{\G(s)^{-1}}{z-s}+\frac{\G(z)^{-1}}{s-z}\right](\E-v)\,.
\end{equation}
The $z$-integral of the first term in the square brackets vanishes by the residue theorem because $s$ lies outside the contour $\Gamma_{\mathrm{in}}$. For the second term, we exchange the order of integration and evaluate the $s$-integral using the residue theorem. Since $z$ lies inside $\Gamma_{\mathrm{out}}$, this integral returns $1$, so that $\K(0)^2=\K(0)$.

We now identify the range and kernel of $\K(0)$. Let $\Psi e^{\lambda(x-vt)}$ be a solution of \eqref{Boltzmann}. Then, $\G(\lambda)\Psi=0$, so that $\G(s)\Psi=(s{-}\lambda)(\E{-}v)\Psi$. Thus, on $\Gamma$, we have $\G(s)^{-1}(\E{-}v)\Psi=(s{-}\lambda)^{-1}\Psi$. Using the residue theorem, we obtain
\begin{equation}\label{yesorno}
\K(0)\Psi=\Psi\times
\begin{cases}
1 & \text{if }\lambda\text{ lies inside }\Gamma\, ,\\
0 & \text{if }\lambda\text{ lies outside }\Gamma\, .
\end{cases}
\end{equation}
The same residue calculation extends straightforwardly to solutions of the form $\Psi(x-vt)=\sum_n \Psi_{(n)}(x-vt)^n e^{\lambda(x-vt)}$.
Since $\Gamma$ encloses precisely the singularities with $\lambda\leq0$, $\K(0)$ acts as the identity on boundary data whose corresponding solutions grow at most polynomially as $x-vt\to+\infty$, and annihilates the exponentially growing modes associated with $\lambda>0$.

It remains to discuss the possible singularity of $\E-v$. If $\E-v$ is invertible, then the differential equation can be written as $\partial_x\Psi=-(\E-v)^{-1}\s\Psi$, and therefore admits a unique solution for every boundary value. Otherwise, let $\Phi\in\ker(\E-v)$. Taking the inner product of $[\s+(\E-v)\partial_x]\Psi=0$ with $\Phi$ yields $(\Phi,\s\Psi)=0$, which is an algebraic constraint on the boundary data. In particular, $\Phi$ itself cannot be prescribed as boundary data, since this would imply $(\Phi,\s\Phi)=0$, and hence $\s\Phi=0$, contradicting $\ker(\s)\cap\ker(\E-v)={0}$. Notably, the factor $\E-v$ in \eqref{Kfund} implies directly that $\K(0)\Phi=0$. Thus, $\K(0)$ also annihilates the directions in $\ker(\E-v)$, which correspond to inadmissible boundary data.
\end{proof}

% Point \textup{(iii)} provides a canonical prescription for constructing admissible boundary data. Indeed, given any trial vector $\Phi\in\mathcal H$, the function $\Psi(x-vt)=\K(x-vt)\Phi$ automatically solves the equations of motion and grows at most polynomially as $x-vt\to\infty$. Thus, $\K(0)$ projects arbitrary trial boundary data onto the subspace of admissible boundary values, automatically enforcing both the differential constraints and the boundedness requirement.

\begin{figure}[h!]
    \centering
    \includegraphics[width=0.36\linewidth]{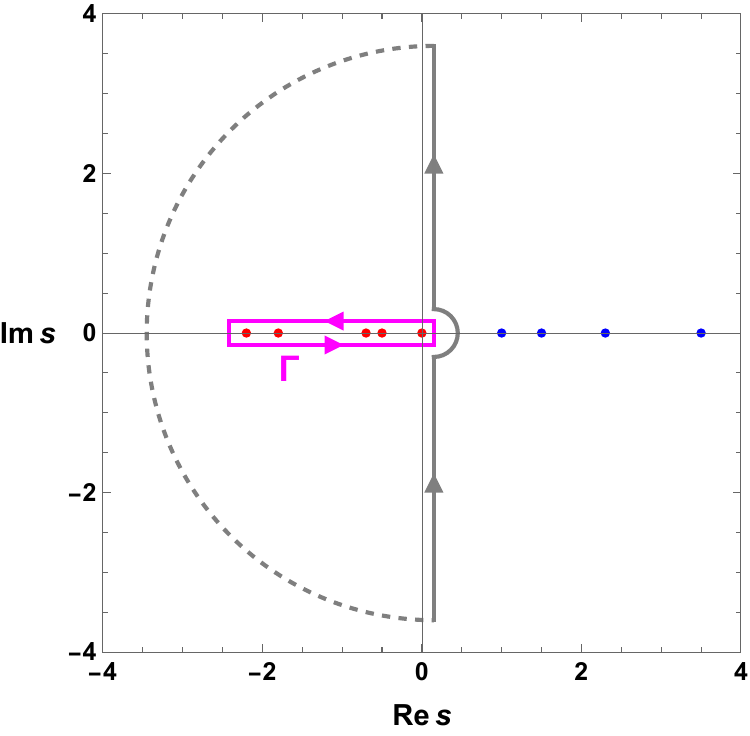}
\caption{Contour deformation in the Laplace plane. The matrix-valued function $\G(s)^{-1}$ is holomorphic everywhere except at a finite set of real singularities (red and blue). We construct a closed contour $\Gamma$ (magenta) enclosing the non-positive singularities (red), and define the propagator $\K(x-vt)$ through the contour integral \eqref{Kfund}. The resulting function $\Psi(x-vt)=\K(x-vt)\Psi(0)$ solves the interface problem with boundary data $\Psi(0)$, provided that $\Psi(0)$ is compatible with the assumption that $\Psi(x-vt)$ grows at most polynomially as $x-vt\to\infty$. Indeed, under this assumption, the vector-valued function $\G(s)^{-1}(\E-v)\Psi(0)$ has only removable singularities (blue points) on the positive real axis. The contour $\Gamma$ may therefore be deformed into the Bromwich contour (gray), recovering the inverse Laplace representation \eqref{laplaceformula}.}
    \label{fig:CloseTheLoop}
\end{figure}

\vspace{-0.3cm}
\subsection{Spectral representation of the propagator}
\vspace{-0.3cm}

Let $s_n$ denote the non-positive singularities of $\G(s)^{-1}$ (red points in Fig.~\ref{fig:CloseTheLoop}). We deform $\Gamma$ into a collection of small contours $\Gamma_n$, each enclosing a single singularity $s_n$. Equation \eqref{Kfund} then becomes $\K(x-vt)=\sum_n\K_n(x-vt)$, where
\begin{equation}\label{Kfundspectral}
\K_n(x-vt)=\oint_{\Gamma_n}\frac{ds}{2\pi i}\,
e^{s(x-vt)}\,\G(s)^{-1}(\E-v)
=
\operatorname*{Res}\!\left(
e^{s(x-vt)}\,\G(s)^{-1},\,s_n
\right)\, (\E-v),
\end{equation}
the residue of matrices being computed elementwise.

The results of Theorem~\ref{theoK} immediately extend to the operators $\K_n$. In particular, each $\K_n(x-vt)$ is entire, satisfies the equations of motion, and $\K_n(0)$ projects onto the subspace of boundary data associated with $s_n$. Indeed, elementary residue theory implies that every matrix element of $\K_n(x-vt)$ is of the form ``Polynomial$(x{-}vt)\,e^{s_n(x-vt)}$'', so that $\K_n$ isolates precisely the modes carrying the exponential factor $e^{s_n(x-vt)}$.

We can finally introduce the central observation of this paper. In general, both the number and the location of the singularities $s_n$ depend strongly on the interface velocity $v$. It is therefore useful to have a simple geometric construction that determines, for any given $v$, which modes contribute to the solution.
To this end, consider the original equation \eqref{Boltzmann}, and look for generic plane-wave solutions of the form $\Psi\propto e^{ikx-i\omega t}$, yielding the eigenvalue problem $(\s+ik\E)\Psi=i\omega\Psi$. Since the interface solution is a superposition of terms $e^{s_n(x-vt)}$, the relevant plane waves are those satisfying $e^{ikx-i\omega t}=e^{s_n(x-vt)}$, which requires $ik=s_n$ and $i\omega=vs_n$. By Theorem~\ref{theoInvert}, the singularities $s_n$ are real. Hence, unlike in the usual Fourier analysis, both $ik$ and $i\omega$ are necessarily real. Conversely, whenever $ik$ is real, the operator $\s+ik\E$ is self-adjoint, so all its eigenvalues $i\omega$ are real. The spectrum may therefore be represented in the real $\{i\omega,ik\}$ plane. The singularities contributing to the interface solution are then obtained by intersecting this graph with the straight line $i\omega=vik$, retaining only the intersections with $ik\le0$. As the interface velocity varies, this line sweeps across the entire region $ik\leq-|i\omega|$, namely the left Rindler wedge of the $\{i\omega,ik\}$ plane (see also \cite{GavassinoLorentzianRelxation:2026seq}), thereby providing a complete geometric construction of all interface-localized solutions.

The interface problem is therefore reduced to a purely geometric construction on the relaxation plane.

% \vspace{-0.3cm}
% \subsection{The degenerate case}
% \vspace{-0.3cm}

% Our analysis has excluded the case $\ker(\s)\cap\ker(\E-v)\neq\{0\}$. In this situation, there exists a nonzero vector $\Phi$ such that $\s\Phi=(\E-v)\Phi=0$. Consequently, every profile of the form $\Psi=f(x-vt)\Phi$, with $f$ an arbitrary differentiable function, is an exact solution of the equations of motion. These modes are simply convected by the interface without propagating or relaxing, and are therefore completely decoupled from the spectral analysis developed above. One may then restrict the dynamics to the orthogonal complement $[\ker(\s)\cap\ker(\E-v)]^\perp$, where $\ker(\s)\cap\ker(\E-v)=\{0\}$, and carry out the previous analysis unchanged.

\section{Application to transient hydrodynamics}
\vspace{-0.3cm}

We now illustrate the formalism and the geometric construction with two transient hydrodynamic models involving only very few degrees of freedom, for which the contour integral \eqref{Kfundspectral} can be evaluated explicitly. In each example, we proceed as follows. First, we recast the equations of motion in the Onsager canonical form \eqref{Boltzmann}. Next, we represent the spectrum in the $\{i\omega,ik\}$ plane and determine, as a function of the interface velocity $v$, the modes contributing to the solution. Finally, we evaluate the propagator $\K(x-vt)$ explicitly for all possible values of $v\in [-1,1]$.

\vspace{-0.4cm}
\subsection{Cattaneo's theory of heat conduction}\label{cattaneoHeatSection}
\vspace{-0.3cm}

The dynamical variables of Cattaneo's theory \cite{cattaneo1958,Jou_Extended,GavassinoNonHydro2022} are the temperature perturbation $\delta T$ and the heat flux $\delta q$ along the $x$ direction (for clarity, we ignore the transverse components). The linearized equations of motion are
$c_v\partial_t\delta T+\partial_x\delta q{=}0$ and
$\tau\partial_t\delta q+\delta q=-\kappa\partial_x\delta T$,
where $c_v$ is the heat capacity per unit volume, $\kappa$ the thermal conductivity, and $\tau$ the relaxation time. The corresponding quadratic free-energy perturbation is $2\Delta\mathcal F=c_v(\delta T)^2/T+\tau(\delta q)^2/(T\kappa)$ \cite{GavassinoNonHydro2022}. Hence, introducing the variables
\vspace{-0.3cm}
\begin{equation}
\Psi=
\begin{bmatrix}
\sqrt{c_v/T}\,\delta T \\
\sqrt{\tau/(\kappa T)}\,\delta q \\
\end{bmatrix} \, ,
\end{equation}
the Onsager inner product becomes simply $(\Psi,\Phi)=\Psi^\dagger\Phi$, so that $\mathcal H=\mathbb C^2$.

In these variables, the equations take the canonical form \eqref{Boltzmann}:
\vspace{-0.1cm}
\begin{equation}
\begin{cases}
\partial_t\Psi_1+w\partial_x\Psi_2=0\, ,\\
\partial_t\Psi_2+\tau^{-1}\Psi_2+w\partial_x\Psi_1=0
\end{cases}
\qquad\Longrightarrow\qquad
\s+\E\partial_x=
\begin{bmatrix}
0&0\\
0&\tau^{-1}
\end{bmatrix}
+
\begin{bmatrix}
0&w\\
w&0
\end{bmatrix}
\partial_x,
\end{equation}
where $w=\sqrt{\kappa/(\tau c_v)}$ is the speed of second sound. Since $w$ is a characteristic speed, causality requires $w\le1$ \cite{Hishcock1983,rezzolla_book,GavassinoCausality2021}.

The condition $\det(\s+ik\E-i\omega)=0$ explicitly reads $i\omega\tau(i\omega\tau-1)-w^2(ik\tau)^2=0$, which defines two branches in the $\{i\omega,ik\}$ plane: the lower branch corresponds to the hydrodynamic mode, while the upper branch corresponds to the non-hydrodynamic mode. The relevant interface excitations are obtained by intersecting these branches with the half-line $i\omega=vik$, with $ik\leq0$. One intersection is always present at the origin, while the remaining intersections depend on $v$ (see Fig.~\ref{fig:CattaneoPlot}): for $0<v<w$, there is one additional hydrodynamic intersection; for $v<-w$, there is one additional non-hydrodynamic intersection; in all other cases, there are no further intersections.

Whenever an additional intersection exists, its Laplace wavenumber is always
\begin{equation}
s_c=\frac{v}{\tau \left(v^2-w^2\right)} \, .
\end{equation}

Let us compute the propagator $\K(x-vt)$ for each case.

\begin{figure}[h!]
    \centering
    \includegraphics[width=0.37\linewidth]{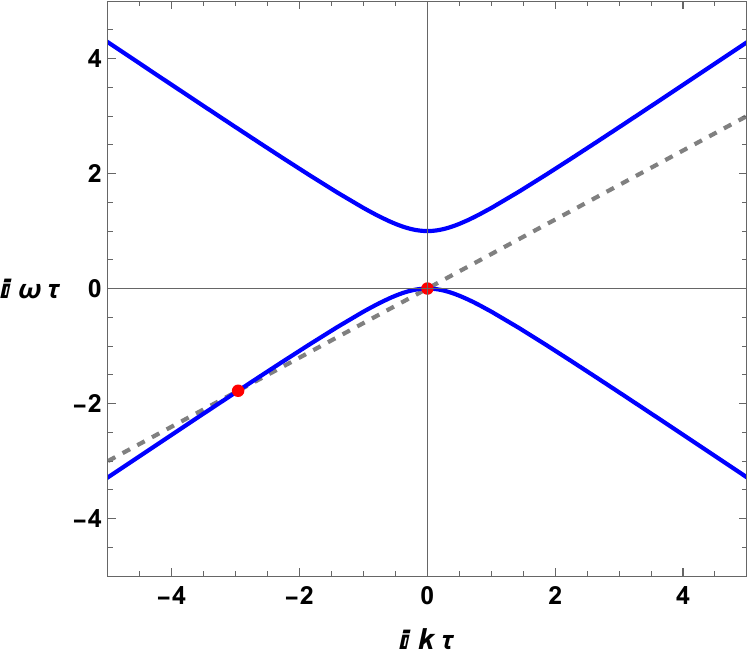}
\caption{Spectrum of Cattaneo's theory in the $\{i\omega\tau,ik\tau\}$ plane (for $w=0.75$). The blue curves are the roots of $\det(\s{+}ik\E{-}i\omega)$, with the lower branch corresponding to the hydrodynamic mode and the upper branch to the non-hydrodynamic mode. The dashed line represents the interface condition $i\omega=vik$. The red points indicate the intersections with $ik\le0$, which determine the spectral contributions to the interface solution. The value of $ik$ at each intersection is precisely the corresponding singularity $s_n$ appearing in the propagator representation \eqref{Kfundspectral}. In the example shown ($0<v<w$), there are two propagator contributions: the equilibrium mode at the origin and one exponentially decaying hydrodynamic mode.}
    \label{fig:CattaneoPlot}
\end{figure}

\subsubsection{Stationary boundary layer ($v=0$)}
\vspace{-0.3cm}

When the interface is at rest ($v=0$), the only singularity is the equilibrium mode at $s=0$. Hence, the residue formula \eqref{Kfundspectral} gives
\begin{equation}\label{cattaboundarylayer}
\K(x)= 
\begin{bmatrix}
1 & -\frac{x}{w\tau} \\
0 & 1 \\
\end{bmatrix}\, .
\end{equation}
We see that a nonzero heat flux induces a linear temperature profile. This is the familiar stationary solution of Fourier's law: when a conducting medium is placed between two plates held at different temperatures, the temperature varies linearly from the hotter plate to the colder one (within the linear regime).

\vspace{-0.3cm}
\subsubsection{Subsonic bow wave ($0<v<w$)}
\vspace{-0.3cm}

If the interface moves towards positive $x$ at a speed smaller than the speed of second sound (i.e. if $0<v<w$), then there are two singularities, at $s=0$ and at $s=s_c<0$. The residue formula \eqref{Kfundspectral} then gives
\begin{equation}\label{doubletrouble}
\K(x-vt)=
\begin{bmatrix}
1 & -\frac{w}{v}\\
0 & 0 \\
\end{bmatrix}
+ e^{\frac{v(x-vt)}{(v^2-w^2)\tau}}
\begin{bmatrix}
0 & \frac{w}{v}\\
0 & 1 \\
\end{bmatrix} \, .
\end{equation}
The first term simply shifts the asymptotic temperature and is therefore not particularly interesting. The second term is dynamical. It describes a hydrodynamic boundary layer forming ahead of the moving source. In the frame $\{t'{=}\gamma(t{-}vx),x'{=}\gamma(x{-}vt)\}$ comoving with the interface, this layer decays exponentially over the characteristic length
\begin{equation}\label{Lprimo}
L'=\dfrac{\gamma(w^2-v^2)\tau}{v}\, .
\end{equation}

\vspace{-0.3cm}
\subsubsection{Supersonic bow wave ($v\geq w$)}
\vspace{-0.3cm}

If the interface moves towards positive $x$ at a speed greater than or equal to the speed of second sound (i.e. if $v\geq w$), then the only singularity is $s=0$, and we obtain
\begin{equation}\label{supersonic}
\K(x-vt)=
\begin{bmatrix}
1 & -\frac{w}{v}\\
0 & 0 \\
\end{bmatrix}\, .
\end{equation}
This time there is no exponentially localized boundary layer. The source moves faster than the medium can transport heat, so that, if the medium is in equilibrium far from the interface, it remains in equilibrium all the way up to the instant at which the source arrives.

Unlike the previous cases, $\K(0)$ is no longer the identity. This reflects the fact that, this time, there is an exponentially growing mode, and $\K(0)$ must remove it. Indeed, Theorem~\ref{theoK} identifies $\operatorname{ran}\K(0)=\operatorname{span}{(1,0)^T}$ as the subspace of boundary data generating solutions that do not grow exponentially, while $\operatorname{ker}\K(0)=\operatorname{span}{(w,v)^T}$ is the complementary direction corresponding to the exponentially growing mode. Thus, $\K(0)$ projects arbitrary trial boundary data onto the admissible subspace by removing precisely its exponentially growing component.

\vspace{-0.3cm}
\subsubsection{Subsonic wake ($-w\leq v<0$)}
\vspace{-0.3cm}

If the interface moves towards negative $x$ at a speed smaller than the speed of second sound (i.e. if $-w<v<0$), the only singularity is again $s=0$, and the propagator is still given by \eqref{supersonic}. Thus, a slowly moving heat source leaves behind no localized wake according to Cattaneo's theory. At first sight, this may seem surprising: shouldn't heat diffuse away from the source? The resolution is that, if the temperature profile is proportional to $e^{sx}$ with $s<0$, diffusion always transports heat from hotter to colder regions, namely towards increasing $x$. Here, however, we are looking for solutions of the form $\Psi(x-vt)$ with $v<0$, which travel towards decreasing $x$. Diffusion alone cannot sustain a tail of this form.

\vspace{-0.3cm}
\subsubsection{Supersonic wake ($v< -w$)}
\vspace{-0.3cm}

If the interface moves towards negative $x$ at a speed greater than or equal to the speed of second sound (i.e. if $v<-w$), then there are two singularities, at $s=0$ and at $s=s_c<0$, and we recover \eqref{doubletrouble}. The previous argument still applies: there is no diffusive wake. However, the source now moves sufficiently fast to generate a non-hydrodynamic tail. In the frame comoving with the interface, this tail decays over the lengthscale $L'$ given in \eqref{Lprimo}. In the ultrarelativistic limit $v\to -1$, we have $L'/\tau\approx\gamma(1-w^2)$. Hence, Lorentz dilation can make this intrinsically microscopic non-hydrodynamic tail macroscopically large.

\subsection{Israel-Stewart theory at zero chemical potential}\label{ISsectionNonanaanan}

We consider a relativistic viscous fluid with no conserved charge described by Israel-Stewart theory \cite{Israel_Stewart_1979,Pu2010}. Its linearized degrees of freedom are the perturbation $\delta\varepsilon$ to the energy density, the perturbation $\delta u$ to the flow velocity along the $x$ direction (again, we ignore transverse flows), and the perturbation $\delta\Pi$ to the bulk pressure\footnote{We work with the bulk pressure to lighten the notation, but bulk viscosity and shear viscosity play the same role in the longitudinal dynamics. In particular, once recast in the canonical form \eqref{Boltzmann}, the equations of motion of a bulk-viscous fluid are identical to those of a shear-viscous fluid.}. The linearized equations of motion read $\partial_t\delta\varepsilon+(\varepsilon{+}P)\partial_x\delta u=0$, $(\varepsilon{+}P)\partial_t\delta u+c_s^2\partial_x\delta\varepsilon+\partial_x\delta\Pi=0$, and $\tau\partial_t\delta\Pi+\delta\Pi=-\zeta\partial_x\delta u$, where $P$ is the pressure, $c_s^2$ the speed of sound, $\zeta$ the bulk viscosity coefficient, and $\tau$ the relaxation time. The corresponding quadratic free-energy perturbation is $2\Delta\mathcal F=c_s^2(\delta\varepsilon)^2/(\varepsilon{+}P)+(\varepsilon{+}P)(\delta u)^2+\tau(\delta\Pi)^2/\zeta$ \cite{GavassinoUniveraalityI2023odx,MullinsInfo2023tjg}. Hence, introducing
\begin{equation}
\Psi=
\begin{bmatrix}
\sqrt{c_s^2/(\varepsilon{+}P)}\,\delta\varepsilon\\
\sqrt{\varepsilon{+}P}\,\delta u\\
\sqrt{\tau/\zeta}\,\delta\Pi
\end{bmatrix},
\end{equation}
the Onsager inner product becomes simply $(\Psi,\Phi)=\Psi^\dagger\Phi$, so that $\mathcal H=\mathbb C^3$, and the equations of motion reduce to
\begin{equation}
\begin{cases}
\partial_t\Psi_1+c_s\partial_x\Psi_2=0\, ,\\
\partial_t\Psi_2+c_s\partial_x\Psi_1+c_a\partial_x\Psi_3=0\, ,\\
\partial_t\Psi_3+\tau^{-1}\Psi_3+c_a\partial_x\Psi_2=0
\end{cases}
\qquad\Longrightarrow\qquad
\s+\E\partial_x=
\begin{bmatrix}
0&0&0\\
0&0&0\\
0&0&\tau^{-1}
\end{bmatrix}
+
\begin{bmatrix}
0&c_s&0\\
c_s&0&c_a\\
0&c_a&0
\end{bmatrix}
\partial_x,
\end{equation}
where $c_a=\sqrt{\zeta/[\tau(\varepsilon+P)]}$. Causality requires the characteristic speed $w=\sqrt{c_s^2+c_a^2}$ to be subluminal \cite{Pu2010}.

The dispersion relation $\det[\s+ik\E-i\omega]=0$ takes the form $(i\omega\tau)^3-(i\omega\tau)^2-w^2(i\omega\tau)(ik\tau)^2+c_s^2(ik\tau)^2=0$, and is shown in Fig.~\ref{fig:ISPlot}. As in Cattaneo's theory, one intersection is always present at the origin, while the remaining intersections depend on the interface velocity. A simple geometric inspection reveals that, for $-c_s<v<0$, there is one additional intersection with the left-moving sound branch (i.e. the mode with $i\omega=-c_sik+\mathcal O(k^2)$), for $c_s<v<w$, there is one additional intersection with the right-moving sound branch (i.e. the mode with $i\omega=c_sik+\mathcal O(k^2)$), and for $v<-w$, there is one additional intersection with the non-hydrodynamic branch. In all other cases, there are no further intersections.

Notably, whenever an additional intersection exists, its Laplace wavenumber is always
\begin{equation}\label{scIS}
s_c=\frac{c_s^2-v^2}{\tau v(w^2-v^2)}\,.
\end{equation}

As in the Cattaneo case, we now examine each velocity regime separately.

\begin{figure}[h!]
    \centering
    \includegraphics[width=0.38\linewidth]{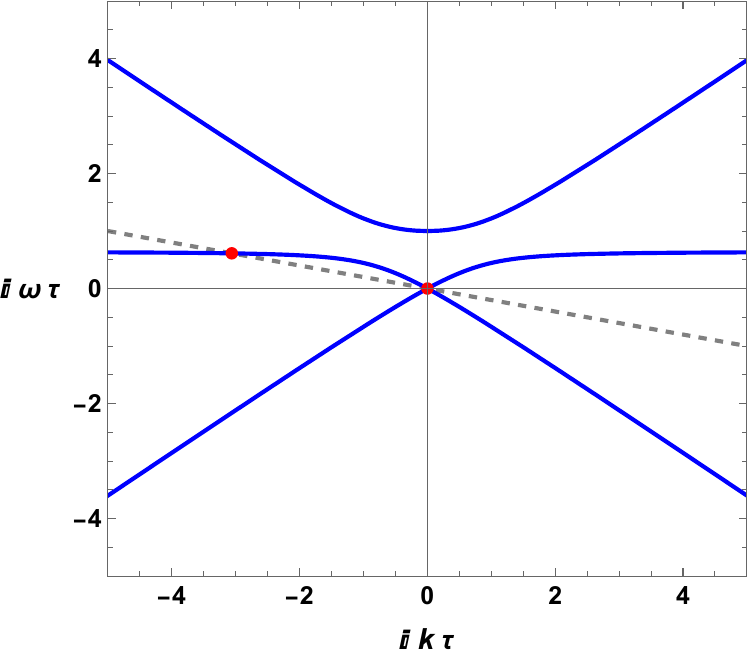}
\caption{Spectrum of Israel-Stewart theory with bulk viscosity in the $\{i\omega\tau,ik\tau\}$ plane (for $c_s=0.6$ and $w=0.75$). The blue curves are the solutions of the dispersion relation $\det(\s{+}ik\E{-}i\omega)=0$, with the lower branches corresponding to the left- and right-moving sound modes, and the upper branch to the non-hydrodynamic mode. The dashed line represents the interface condition $i\omega=vik$. The red points indicate the intersections with $ik\le0$, which determine the spectral contributions to the interface solution. In the example shown ($-c_s<v<0$), there are two contributions: the equilibrium mode at the origin and one exponentially decaying left-moving sound mode.}
    \label{fig:ISPlot}
\end{figure}

\subsubsection{Stationary boundary layer ($v=0$)}

When the interface is at rest ($v=0$), the only singularity is at $s=0$, and we obtain
\begin{equation}
\K(x)=
\begin{bmatrix}
1 & 0 & \frac{c_a}{c_s} \\
0 & 1 & 0 \\
0 & 0 & 0
\end{bmatrix}\, .
\end{equation}
This shows that a fluid at zero chemical potential cannot sustain a nontrivial longitudinal boundary layer in the hydrodynamic regime: the general solution $\K(x)\Phi$ is independent of $x$. By contrast, fluids at finite chemical potential can exhibit stationary thermal gradients, with a temperature that varies linearly in $x$, as illustrated by the Cattaneo solution \eqref{cattaboundarylayer}.

Since $\operatorname{ran}\K(0)=\operatorname{span}\{(1,0,0)^T,(0,1,0)^T\}$, every admissible boundary value satisfies $\delta\Pi=0$. On the other hand, the vector $(-c_a,0,c_s)^T\in\ker\K(0)$ does not correspond to an exponentially growing mode; rather, it spans $\ker\E$. In agreement with Theorem~\ref{theoK}, its removal reflects an algebraic constraint imposed by the stationary equations. Indeed, taking the inner product of $[\s+\E\partial_x]\Psi=0$ with $(-c_a,0,c_s)^T$ gives $\Psi_3=0$. Thus, boundary data with $\delta\Pi\neq0$ cannot be extended to a stationary solution.

\subsubsection{Subsonic bow wave ($0<v<c_s$)}

When the interface moves subsonically to the right ($0<v<c_s$), $s=0$ is still the only singularity, and we find
\begin{equation}\label{subsonicbow}
\K(x-vt)=
\begin{bmatrix}
1 & 0 & \frac{c_s c_a}{c_s^2-v^2} \\
0 & 1 & \frac{v c_a}{c_s^2-v^2} \\
0 & 0 & 0
\end{bmatrix}\, .
\end{equation}
Again, the fluid cannot sustain any stationary localized gradient, nor any viscous correction: the general solution $\K(x{-}vt)\Phi$ is independent of $x{-}vt$, and has $\delta\Pi=0$. Physically, the interface moves more slowly than the speed of sound, so any sound wave generated at the interface propagates away faster than the interface itself. As a result, no localized bow wave can form.

Unlike the stationary case, however, $\E-v$ is now invertible, so the differential equation admits a unique solution for every boundary datum. This time, $\K(0)$ projects out a genuine exponentially growing mode. In particular, the vector $(-c_s c_a,-vc_a,c_s^2{-}v^2)^T$ spans $\ker\K(0)$ and generates a solution that grows exponentially away from the interface. Physically, this mode represents a right-moving sound wave. Left to itself, it would propagate at the sound speed $c_s>v$. The exponentially growing profile generates a diffusive flux of momentum from right to left, which continuously replenishes the wave and reduces the propagation speed of its envelope down to $v$.

\subsubsection{Sonic bow wave ($v=c_s$)}

When the interface moves to the right exactly at the speed of sound ($v=c_s$), $s=0$ is again the only singularity. However, this time we have
\begin{equation}\label{theone}
\K(x-vt)=
\begin{bmatrix}
1 & 0 & -\frac{x-vt}{\tau c_a} \\
0 & 1 & -\frac{x-vt}{\tau c_a} \\
0 & 0 & 1
\end{bmatrix}\, .
\end{equation}
Unlike the previous case, $\K(0)$ is now the identity, so every boundary datum is admissible. Moreover, a nonzero viscous pressure generates linear gradients in both the energy density and the flow velocity.
This reflects the fact that $v=c_s$ is a critical velocity. As $v\to c_s^-$, the growth rate of the excluded sound mode tends to zero. At $v=c_s$, it merges with the equilibrium mode at $s=0$, the exponential modulation is replaced by a linear profile, and $\K(0)$ becomes the identity.

\subsubsection{Supersonic subcharacteristic bow wave ($c_s<v<w$)}

Once the interface velocity exceeds the speed of sound, but remains below the characteristic speed $w$, the additional singularity \eqref{scIS} crosses the origin and becomes negative. As a result, the propagator acquires two contributions:
\begin{equation}\label{propagoSpersonic}
\K(x-vt)=
\begin{bmatrix}
1 & 0 & \frac{c_s c_a}{c_s^2-v^2} \\
0 & 1 & \frac{v c_a}{c_s^2-v^2} \\
0 & 0 & 0
\end{bmatrix}
+e^{-\frac{(v^2-c_s^2)(x-vt)}{\tau v(w^2-v^2)}}
\begin{bmatrix}
0 & 0 & -\frac{c_s c_a}{c_s^2-v^2} \\
0 & 0 & -\frac{v c_a}{c_s^2-v^2} \\
0 & 0 & 1
\end{bmatrix}\, .
\end{equation}
The first term is the equilibrium contribution, while the second is a right-moving sound mode with an exponentially decaying profile. In the homogeneous medium, this wave would propagate at the sound speed $c_s<v$, and would therefore lag behind the interface. The exponential modulation, however, creates a diffusive flux of momentum from left to right. This continuous transfer of momentum accelerates the envelope until it propagates at the interface velocity $v$, making the profile stationary in the comoving frame.

The solution \eqref{propagoSpersonic} admits a natural interpretation as the asymptotic viscous tail of a shock wave. Indeed, the condition $v>c_s$ is precisely the Lax shock condition for a right-moving shock propagating into a homogeneous fluid at rest. The interface at $x=vt$ should therefore be viewed as the point where the disturbance has become sufficiently weak for the linear approximation to apply, while the actual shock front lies further behind, at some position $x=-d+vt$, with $d\gg |s_c|^{-1}$. The exponentially decaying contribution is then the final part of the viscous regularization of the shock front.

In the frame $\{t'=\gamma(t-vx),x'=\gamma(x-vt)\}$ comoving with the interface, the tail decays over the characteristic length
\begin{equation}
L'
=
\frac{\gamma v(w^2-v^2)\tau}{v^2-c_s^2}\,.
\end{equation}
As $v\to c_s^+$, corresponding to the weak-shock limit, $L'$ diverges. The gradients therefore become arbitrarily small, and the tail lies entirely within the hydrodynamic regime. Expanding near $v=c_s$, we find
\begin{equation}
L'
=
\frac{\gamma\zeta}{2(\varepsilon+P)(v-c_s)}
+\mathcal O(1),
\end{equation}
which is independent of the relaxation time $\tau$. Thus, in the weak-shock limit, the asymptotic tail is universal and depends only on the Navier-Stokes transport coefficient $\zeta$. By contrast, as $v\to w^-$, the decay length shrinks to zero, and the tail probes microscopic scales. In this regime, its detailed structure depends on the particular transient theory used to regularize hydrodynamics, with Israel-Stewart theory providing only one possible realization.

\subsubsection{Supercharacteristic bow wave ($v\geq w$)}

When the interface velocity reaches the characteristic speed $w$, the singularity $s_c$ escapes to $-\infty$. For $v>w$, it reappears on the positive real axis, where it is excluded from the contour integral. Hence, the propagator reduces again to \eqref{subsonicbow}: the fluid can no longer sustain stationary localized gradients or viscous corrections.

We have recovered a well-known result: Israel-Stewart theory cannot regularize supercharacteristic shocks with a smooth viscous profile \cite{OlsonShock1990,BemficaShock2025}. Geometrically, the reason is transparent. Once $v$ exceeds the characteristic speed, the line $i\omega=vik$ no longer intersects the spectrum in the left half-plane, so no exponentially decaying mode is available to form the asymptotic shock tail. More generally, this is a manifestation of the hyperbolic nature of Israel-Stewart theory: supercharacteristic shocks are necessarily non-smooth.

This conclusion extends well beyond Israel-Stewart theory. Indeed, it was shown in \cite{GavassinoLorentzianRelxation:2026seq} that, for every system of the form \eqref{Boltzmann}, the graph of the spectrum in the $\{i\omega,ik\}$ plane has slope everywhere bounded by $w=||\E||$, namely the fastest characteristic speed. Consequently, if $v>w$, the line $i\omega=vik$ cannot intersect the spectrum for $ik<0$, and therefore no exponentially decaying tail can exist. It follows that only theories with $w=1$, such as kinetic theory, are capable of regularizing arbitrarily strong relativistic shocks. 

\subsubsection{Subsonic wake ($-c_s<v<0$)}

When the interface moves to the left at a speed smaller than the speed of sound ($-c_s<v<0$), the additional singularity again lies on the negative real axis, so the propagator is still given by \eqref{propagoSpersonic}. The exponentially decaying contribution now corresponds to the left-moving sound wave.

This solution naturally describes the asymptotic tail behind a shock wave. Indeed, in the frame where the fluid behind the shock is at rest, the shock velocity is typically subsonic. The interface $x=vt$ should therefore be viewed as the point where the disturbance has become sufficiently weak for the linear approximation to apply, while the shock front itself lies further to the left. The exponentially decaying contribution is then the hydrodynamic wake left behind by the shock.
Physically, the left-moving sound wave would normally propagate at speed $-c_s<v$, and would therefore outrun the interface. By acquiring an exponentially decaying profile, however, momentum diffusion continuously transfers momentum from the compressed region ahead towards the wake, reducing the propagation speed of the envelope until it matches the interface velocity $v$.

In the frame $\{t'=\gamma(t{-}vx),x'=\gamma(x{-}vt)\}$ comoving with the interface, the wake decays over the lengthscale
\begin{equation}\label{thelui}
L'
=
\frac{\gamma\tau |v|(w^2-v^2)}{c_s^2-v^2}\,.
\end{equation}
As $v\to -c_s^+$, corresponding to the weak-shock limit, $L'\to\infty$, so the wake becomes arbitrarily broad and is entirely described by hydrodynamics. Expanding about $v=-c_s$, one again finds that the leading contribution to $L'$ depends only on the viscosity coefficient $\zeta$, and is independent of the relaxation time $\tau$. By contrast, as $v\to0^-$, the singularity $s_c$ escapes to $-\infty$, and $L'\to0$. The wake then probes microscopic scales, and its detailed structure depends on the particular transient theory used to regularize hydrodynamics.

\subsubsection{Sonic wake ($v=-c_s$)}

As $v\to-c_s^+$, the singularity $s_c$ approaches the origin and merges with the equilibrium singularity at $s=0$. At the sonic point, the exponentially decaying wake is therefore replaced by a linearly varying profile, and the propagator becomes
\begin{equation}\label{theone}
\K(x-vt)=
\begin{bmatrix}
1 & 0 & \frac{x-vt}{\tau c_a} \\
0 & 1 & -\frac{x-vt}{\tau c_a} \\
0 & 0 & 1
\end{bmatrix}\, .
\end{equation}
In particular, $\K(0)=1$, so every boundary datum is admissible. A nonzero viscous perturbation generates opposite linear gradients in the first two hydrodynamic variables. Thus, precisely at the sonic velocity, the exponential wake delocalizes into a polynomially growing solution, reflecting the coalescence of the left-moving sound mode with the equilibrium mode.

\subsubsection{Supersonic subcharacteristic wake ($-w\leq v<-c_s$)}

When the interface velocity decreases below $-c_s$, the additional singularity $s_c$ crosses the origin and becomes positive. It is therefore excluded from the contour integral, and the propagator reduces again to \eqref{subsonicbow}. Thus, although the interface remains subcharacteristic, no additional exponentially decaying mode contributes: the fluid cannot sustain a stationary localized wake or viscous correction.

The excluded mode is a left-moving sound wave with an exponentially growing profile. In the homogeneous medium, it would propagate at the sound speed $-c_s$, which is slower than the interface velocity $v<-c_s$. The exponential growth from left to right causes viscosity to continuously transfer momentum to the left, accelerating the envelope until it propagates at the interface velocity $v$. Since this requires an exponentially growing profile, however, the mode is excluded from the contour integral, and only the homogeneous equilibrium solution remains.

\subsubsection{Supercharacteristic wake ($v< -w$)}

Once the interface velocity exceeds the characteristic speed, the additional singularity re-enters the negative real axis, so the propagator is again given by \eqref{propagoSpersonic}. This time, however, the singularity $s_c$ belongs to the non-hydrodynamic branch, so the exponentially decaying contribution represents a non-hydrodynamic wake trailing behind the interface.

The exponential decay length in the frame $\{t'=\gamma(t-vx),x'=\gamma(x-vt)\}$ comoving with the interface is still \eqref{thelui}. In contrast to the weak-shock regime, this large length scale is not produced by a sound-mode singularity approaching the origin. Instead, the non-hydrodynamic mode retains a microscopic relaxation timescale of order $\tau$, while the corresponding spatial scale in the comoving frame is stretched by the Lorentz factor. Hence, as $v\to-1$, $\gamma\to\infty$ and consequently $L'\to\infty$. Thus, even though the wake remains intrinsically non-hydrodynamic, Lorentz dilation can make it macroscopically large in the frame comoving with the interface.

\vspace{-0.3cm}
\section{Application to systems with many degrees of freedom}
\vspace{-0.3cm}

As the dimension of $\mathcal{H}$ increases, evaluating the propagator $\K(x-vt)$ exactly becomes increasingly difficult. Indeed, the singularities of $\G(s)^{-1}$ become the roots of a high-degree characteristic polynomial, which generically cannot be expressed in closed form. Nevertheless, the geometric construction developed in this work remains fully applicable. Combined with the bounds of \cite{GavassinoLorentzianRelxation:2026seq}, it provides rigorous information about the structure of the interface solutions and about the location of the singularities, which determine the exponential factors appearing in the asymptotic tails.

Moreover, if one is interested only in these exponential factors, there is no need to restrict to finite-dimensional systems. In Appendix~\ref{AAA}, we show that, under suitable convergence assumptions (including a spectral gap for the non-hydrodynamic modes), one can still define a propagator $\K(x-vt)$ that characterizes the interface solutions. Its matrix elements admit the spectral representation
\begin{equation}\label{spectralrepresentation}
(\Phi_1,\K(x-vt)\Phi_2)
=
\int_{-\infty}^{0^+} ds\,\rho(s)e^{s(x-vt)},
\end{equation}
where the spectral density $\rho(s)$ is supported on those values of $s$ for which the point $(i\omega,ik)=(vs,s)$ belongs to the spectrum of the theory\footnote{Note that $\rho(s)$ is in general a distribution, so the representation \eqref{spectralrepresentation} can easily incorporate polynomial prefactors multiplying the exponentials. For instance, a term in $\rho(s)$ proportional to $\delta^{(n)}(s-s_0)$ gives rise to a term $\propto (x-vt)^n e^{s_0(x-vt)}$ in the matrix element.}.

In the remainder of this section, we illustrate this construction with two examples: a finite-dimensional theory with many degrees of freedom, and an infinite-dimensional theory.

\vspace{-0.3cm}
\subsection{Cattaneo theory with multiple heat fluxes}
\vspace{-0.3cm}

As a simple application of the formalism, let us consider a generalization of Cattaneo's theory of heat conduction (see subsection \ref{cattaneoHeatSection}), where the total heat flux is the sum of $N$ contributions, each with its own conductivity and relaxation time. The equations of motion then read
$c_v \partial_t \delta T+\sum_n \partial_x \delta q_n=0$
and
$\tau_n \partial_t \delta q_n+\delta q_n=-\kappa_n \partial_x \delta T$.
The corresponding quadratic free-energy density perturbation is
$2\Delta\mathcal F=c_v(\delta T)^2/T+\sum_n\tau_n(\delta q_n)^2/(T\kappa_n)$
\cite{GavassinoNonHydro2022}. Hence, setting
$\Psi=(\sqrt{c_v/T}\,\delta T,\sqrt{\tau_1/(\kappa_1T)}\,\delta q_1,\sqrt{\tau_2/(\kappa_2T)}\,\delta q_2,\ldots)$,
we obtain $(\Psi,\Phi)=\Psi^\dagger\Phi$ and $\mathcal H=\mathbb C^{1+N}$.

In these variables, we find that $\s$ and $\E$ have the block form
\begin{equation}
\s=
\begin{bmatrix}
0 & 0 \\
0 & \mathbb{T}^{-1}
\end{bmatrix},
\qquad\qquad
\E=
\begin{bmatrix}
0 & \mathbb{V}^T \\
\mathbb{V} & 0
\end{bmatrix},
\end{equation}
where $\mathbb{T}=\mathrm{diag}(\tau_1,\tau_2,\ldots)$ and $\mathbb{V}=(c_1,c_2,\ldots)^T$, with $c_n=\sqrt{\kappa_n/(\tau_n c_v)}$. Causality requires that the characteristic speed $w=\sqrt{\mathbb{V}^T\mathbb{V}}$ does not exceed $1$.

Let us now study the structure of the spectrum of this theory in the $\{i\omega,ik\}$ plane. For clarity, we assume that all $\tau_n$ are distinct and all $c_n$ are non-zero. The condition $\det(\s+ik\E-i\omega)=0$ can be written explicitly using block-matrix theory, giving
\begin{equation}\label{thePoly}
\left[i\omega+\sum_n \dfrac{(c_n ik)^2}{\tau_n^{-1}-i\omega}\right]
\prod_m(\tau_m^{-1}-i\omega)=0.
\end{equation}
This is a polynomial of degree $1+N$ in $i\omega$, and therefore defines $1+N$ dispersion relations $i\omega_n(ik)$. For $ik=0$, the roots are simply $0$ and $\tau_n^{-1}$, corresponding to one hydrodynamic mode and $N$ non-degenerate non-hydrodynamic modes. As $ik$ is turned on, these modes define $1+N$ analytic dispersion curves in a neighborhood of the real $ik$-axis.

Now suppose that the relaxation times are ordered so that $\tau_1^{-1}<\tau_2^{-1}<\cdots<\tau_N^{-1}$, and consider the square bracket in \eqref{thePoly} for $ik\neq0$. Each term $(c_nik)^2/(\tau_n^{-1}{-}i\omega)$ diverges to $+\infty$ as $i\omega$ approaches $\tau_n^{-1}$ from below, and to $-\infty$ as it approaches from above. Thus, between any two consecutive relaxation rates $\tau_{n}^{-1}$ and $\tau_{n+1}^{-1}$, the square bracket changes sign and therefore possesses a zero. We conclude that there are exactly $N{-}1$ roots $i\omega_n$, each lying between two consecutive relaxation rates. The remaining two roots become unbounded as $|ik|{\to}\infty$. Indeed, isolating $(ik)^2$, we obtain
\begin{equation}
(ik)^2
=
\dfrac{i\omega}{\sum_n c_n^2(i\omega-\tau_n^{-1})^{-1}}
\xrightarrow{|i\omega|\to\infty}
\dfrac{(i\omega)^2}{w^2},
\end{equation}
so that the two remaining branches satisfy $i\omega\approx\pm w|ik|$ asymptotically. Since the hydrodynamic branch is connected to the root $i\omega=0$, it follows that
$i\omega\approx-w|ik|$,
while the remaining non-hydrodynamic branch satisfies
$i\omega\approx w|ik|$.
Thus, independently of the number of relaxation channels, only one non-hydrodynamic branch reaches the characteristic speed $w$, while all the others remain confined between adjacent relaxation rates.

An explicit example is shown in figure \ref{fig:CattaneoMultple}.

\begin{figure}[h!]
    \centering
    \includegraphics[width=0.38\linewidth]{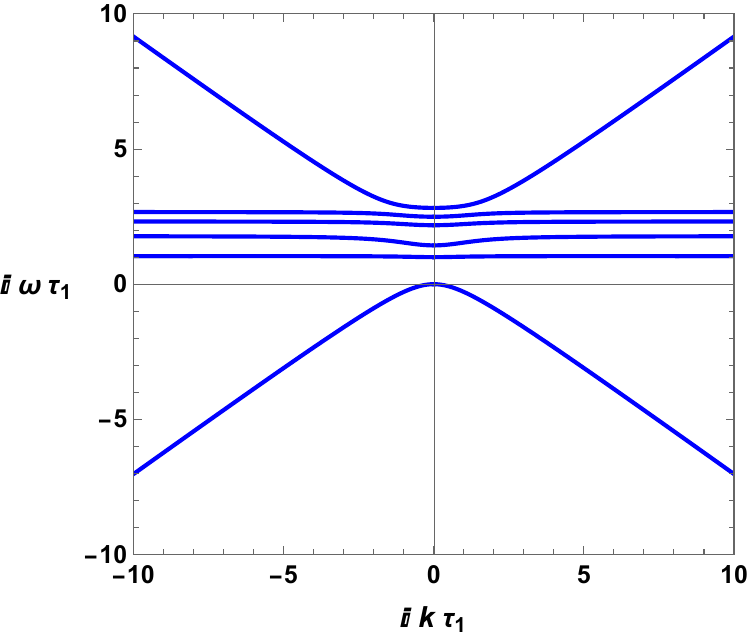}
\caption{Dispersion relations of the generalized Cattaneo theory with five relaxation channels. The relaxation times and conductivities were chosen randomly, subject to the constraint $w=0.8$. Frequencies and wave numbers are expressed in units of the slowest relaxation time $\tau_1$. The figure illustrates the generic structure predicted by equation \eqref{thePoly}: one hydrodynamic branch and one non-hydrodynamic branch asymptotically approach the characteristic cone $i\omega=\pm wik$, whereas the remaining non-hydrodynamic branches remain trapped between consecutive relaxation rates.}
    \label{fig:CattaneoMultple}
\end{figure}

\newpage
We can use the above spectral characterization to describe the solutions of the interface problem without explicitly determining the intersections with the line $i\omega=vik$, as we now illustrate.

For stationary boundary layers ($v=0$), the only intersection occurs at the origin. In this particular case, the propagator can still be computed explicitly, yielding
\begin{equation}
\K(x)=
\begin{bmatrix}
1 & & -\dfrac{\mathbb{V}^T x}{\mathbb{V}^T \mathbb{T}\mathbb{V}} \\
\\
0 & & \dfrac{\mathbb{T}\mathbb{V}\mathbb{V}^T}{\mathbb{V}^T \mathbb{T}\mathbb{V}}
\end{bmatrix} \, ,
\end{equation}
which is similar to the propagator of the original Cattaneo theory: the temperature develops a linear gradient whenever the total heat flux is non-vanishing. The structure of the bow waves ($v>0$) is also similar to the Cattaneo case: There is a single exponentially decaying tail for subcharacteristic interfaces ($v<w$), while no tail exists for supercharacteristic motions ($v>w$).

The novelties with respect to the original Cattaneo theory arise for wakes ($v<0$). In this case, there are at least $N-1$ exponentially decaying tails for every interface velocity. Indeed, the line $i\omega=vik$ passes through each band $\tau_n^{-1}<i\omega<\tau_{n+1}^{-1}$ for $(v\tau_n)^{-1}>ik>(v\tau_{n+1})^{-1}$, and must therefore intersect the corresponding dispersion relation at least once. Hence, the matrix elements $(\Phi_1,\K(x-vt)\Phi_2)$ contain a linear combination of exponential factors $e^{s_n(x-vt)}$, each of which decays in the frame comoving with the interface over a characteristic length
\begin{equation}
L'_n\in(\gamma|v|\tau_{n+1},\,\gamma|v|\tau_n)\, .
\end{equation}
This estimate is especially useful when the theory contains many degrees of freedom, so that the relaxation times $\tau_n$ are densely spaced. In that case, the interval becomes very narrow, and one simply has $L'_n\approx\gamma|v|\tau_n$.

\subsection{Kinetic theory in the relaxation-time approximation}

As an example of a system with $\dim\mathcal{H}=\infty$, we consider an ultrarelativistic bosonic gas at zero chemical potential, modeled by kinetic theory in the Relaxation-Time Approximation (RTA) \cite{AndersonWitting1974,cercignani_book}. For simplicity, we restrict attention to momentum distribution functions that are isotropic in the transverse plane, so that all flows are longitudinal. In Appendix~\ref{BBB}, we show that, under a suitable assumption on the energy dependence of the perturbed distribution function, one may take $\mathcal{H}=L^2([-1,1])$, endowed with the Onsager inner product
$(\Psi,\Phi)=\int_{-1}^1\Psi(\Omega)^*\Phi(\Omega)\,d\Omega$,
where $\Omega$ denotes the $x$-component of the particle velocity, and $\Psi$ is the relevant velocity distribution. We also show that the linearized Boltzmann equation takes the form
\begin{equation}
(\partial_t+\Omega\partial_x)\Psi
=
\tau^{-1}\left[
\dfrac{(1,\Psi)}{(1,1)}
+\Omega\dfrac{(\Omega,\Psi)}{(\Omega,\Omega)}
-\Psi
\right] .
\end{equation}

The spectrum is determined by the eigenvalue problem $(\s+ik\E)\Psi=i\omega\Psi$. It is immediate to see that the operator $\s+ik\E$ differs from the multiplication operator $\tau^{-1}+ik\Omega$ by a rank-two projector. Hence, its spectrum consists of the continuous part $i\omega\in\tau^{-1}+ik[-1,1]$, together with at most two discrete eigenvalues, corresponding to the left- and right-moving sound modes. The discrete modes satisfy
\vspace{-0.1cm}
\begin{equation}
\Psi
=\dfrac{1}{1-i\omega\tau+ik\tau\Omega} \left[
\dfrac{(1,\Psi)}{(1,1)}
+\Omega\dfrac{(\Omega,\Psi)}{(\Omega,\Omega)}\right] \, ,
\end{equation}
and exist provided the resulting function $\Psi$ belongs to $L^2([-1,1])$. This is the case if and only if $i\omega$ lies outside the continuous spectrum. Taking the inner product of this equation with $1$ and $\Omega$ gives
\vspace{-0.1cm}
\begin{equation}
\begin{bmatrix}
\text{arctanh}\left( \dfrac{ik\tau}{1{-}i\omega\tau}\right)-ik\tau & & & 3-3\dfrac{1{-}i\omega\tau}{ik\tau}\text{arctanh}\left(\dfrac{ik\tau}{1{-}i\omega\tau} \right)  \\
\dfrac{1{-}i\omega\tau}{ik\tau}\text{arctanh} \left(\dfrac{ik\tau}{1{-}i\omega\tau} \right)-1 & & & ik\tau+3\dfrac{1{-}i\omega\tau}{ik\tau}-3\dfrac{(1{-}i\omega\tau)^2}{(ik\tau)^2} \text{arctanh} \left(\dfrac{ik\tau}{1{-}i\omega\tau} \right)  \\
\end{bmatrix}
\begin{bmatrix}
(1,\Psi) \\
\\
(\Omega,\Psi) \\
\end{bmatrix}=0 \, .
\end{equation}
Setting the determinant of the $2\times 2$ matrix to $0$, we obtain an implicit function $F(i\omega,ik)=0$. Remarkably, the solutions can be written explicitly as parametric curves. In particular, by setting $r=ik\tau/(1-i\omega\tau)$, we obtain
\vspace{-0.1cm}
\begin{equation}\label{parametric}
\begin{split}
ik(r)\tau={}& \frac{1}{2} \left[\frac{3 \text{arctanh}(r)}{r^2}\pm\sqrt{\left(\frac{3 \text{arctanh}(r)}{r^2}-\frac{3}{r}+\text{arctanh}(r)\right)^2-4 \left(\frac{3 \text{arctanh}(r)}{r}-3\right)}-\frac{3}{r}+\text{arctanh}(r)\right] \, , \\
i\omega(r)\tau ={}& 1-ik(r)\tau/r \, . \\
\end{split}
\end{equation}
As a consistency check, we expand the parametric solution for small $r$ and match it to the hydrodynamic sound-wave dispersion relation
$i\omega=\mp c_sik-\frac{2\eta}{3(\varepsilon+P)}(ik)^2+\cdots$.
This yields the speed of sound of an ultrarelativistic gas, $c_s=1/\sqrt3$, together with the shear viscosity $\eta=\frac15(\varepsilon+P)\tau$, in agreement with the standard RTA result.

The spectrum of the theory is plotted in Figure~\ref{fig:RTASOUND}. Taking its intersections with the line $i\omega=vik$, we find that, for all subluminal interface velocities, the matrix elements of the propagator always contain a continuous contribution:
\vspace{-0.1cm}
\begin{equation}\label{spectralrRTA}
(\Phi_1,\K(x-vt)\Phi_2)
=\text{``Discrete part''}+
\int_{-\infty}^{-\tau^{-1}(1-v)^{-1}} ds\,\rho(s)e^{s(x-vt)},
\end{equation}
so that RTA kinetic theory can smooth arbitrarily strong shocks. In the frame comoving with the interface, the continuous contribution decays over a characteristic length of at most $L'{=}\gamma\tau(1{-}v)$. As $v{\to}1$ (near-luminal bow wave), $L'{\to}0$, so the tail in front of the interface produced by the continuous spectrum becomes infinitely thin. Conversely, as $v{\to}-1$ (near-luminal wake), $L'{\to}\infty$, so the tail left behind the interface becomes macroscopically large.

\begin{figure}[h!]
    \centering
    \includegraphics[width=0.38\linewidth]{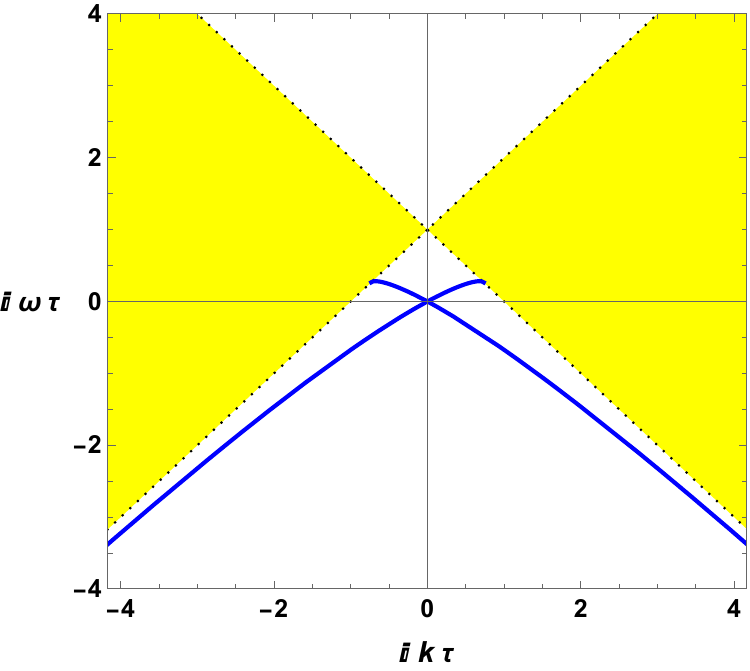}
\caption{Spectrum of RTA kinetic theory in the $\{i\omega\tau,ik\tau\}$ plane. The yellow region represents the continuous spectrum $i\omega\tau\in1+ik\tau[-1,1]$. The blue curves are the two sound branches obtained from the parametric solution \eqref{parametric}, which is defined only for $r\in[-1,1]$. As shown in the figure, the branches meet the continuous spectrum at $(i\omega\tau,ik\tau)=(1/4,\pm3/4)$ and bend so as to become tangent to its boundary at these points. Their unshown tails extend to $i\omega\to-\infty$ and $ik\tau\to\pm\infty$, asymptotically approaching $i\omega=-|ik|$.}
    \label{fig:RTASOUND}
\end{figure}

Finally, let us comment briefly on the discrete contributions to the propagator. The intersection at $(i\omega,ik)=(0,0)$ is always present, and describes global shifts of the equilibrium state. For $v>0$ (bow waves), the situation is analogous to the Israel--Stewart case discussed in section \ref{ISsectionNonanaanan} with $w=1$: there are no exponentially decaying discrete modes up to $v=1/\sqrt{3}=c_s$. Above this threshold (supersonic bow waves), a single exponentially decaying discrete mode appears, corresponding to the intersection with the right-moving sound branch. This mode disappears only at $v=1$, where the continuous contribution also vanishes. In the limit $v\to(1/\sqrt{3})^+$ (e.g. for weak shocks), the intersection approaches the origin and therefore lies entirely within the hydrodynamic regime. In the frame comoving with the interface, the corresponding tail decays over the characteristic length
\begin{equation}
L'=\dfrac{2\gamma \eta}{3(\varepsilon{+}P)(v-c_s)} +\mathcal{O}(1)\, .
\end{equation}

For $v<0$ (wakes), the left-moving sound branch intersects the line $i\omega=vik$ only for $-1/\sqrt{3}<v<-1/3$. At $v=-(1/3)^-$, the intersection occurs precisely at the point $(i\omega\tau,ik\tau)=(1/4,-3/4)$, where the sound branch is absorbed into the continuous spectrum, producing a tail that decays over the characteristic length $L'=4\gamma\tau/3$. As $v$ approaches $-1/\sqrt{3}$ from above (e.g. for weak shocks), the intersection enters the hydrodynamic regime and the corresponding tail becomes macroscopic, with
\begin{equation}
L'=\dfrac{2\gamma \eta}{3(\varepsilon{+}P)(v+c_s)} +\mathcal{O}(1) \, .
\end{equation}

\section{An exactly solvable kinetic theory}

We conclude the article by presenting an exactly solvable kinetic theory with $\dim\mathcal{H}=\infty$, for which the propagator can be computed analytically at arbitrary interface velocity.

\subsection{Model assumptions}

We consider a non-degenerate gas of massless particles moving in one spatial dimension. Each particle has momentum $p\in\mathbb{R}$, energy $|p|$, and velocity $\Omega=\mathrm{sign}(p)$. The linearized phase-space distribution function is therefore $\delta f(t,x,p)$. The particles propagate through a medium held at inverse temperature $\beta$, which acts as a thermal bath and scatters them within a mean free time $\tau$. Assuming that the scatterings are fully randomizing, the Boltzmann equation takes the RTA-like form
\begin{equation}\label{boltzmannscatteringparticle}
(\partial_t +\Omega\partial_x)\delta f
=
\dfrac{1}{\tau}
\left[
\dfrac{\int dp'\,\delta f(p')}
{\int dp'\,e^{-\beta|p'|}}
e^{-\beta|p|}
-\delta f
\right] \, ,
\end{equation}
where the form of the gain term is completely constrained by two requirements: that the collision integral vanishes in local equilibrium, and that the particle number is conserved under scatterings.

The quadratic free-energy density associated with the perturbation $\delta f$ is
\begin{equation}\label{freenergyRTA1D}
2\Delta\mathcal F
=
\int_{-\infty}^\infty
\dfrac{dp}{2\pi} \,
\dfrac{(\delta f)^2}
{\beta e^{-\beta|p|}}.
\end{equation}
Hence, defining $\Psi=\delta f/\sqrt{2\pi \beta e^{-\beta|p|}}$, the Onsager inner product becomes $(\Psi,\Phi)=\int_{-\infty}^{\infty}\Psi^*(p)\Phi(p)\,dp $,
so that the relevant Hilbert space is $\mathcal H=L^2(\mathbb R)$. The Boltzmann equation \eqref{boltzmannscatteringparticle} then becomes
\begin{equation}\label{boltzmannscatteringparticle2}
(\partial_t +\Omega\partial_x)\Psi
=
\dfrac{1}{\tau}
\left[
(e_0,\Psi)e_0
-\Psi
\right],
\qquad
e_0
=
\dfrac{\sqrt{e^{-\beta|p|}}}
{\left\|\sqrt{e^{-\beta|p|}}\right\|}\, .
\end{equation}
We now exploit the simplicity of this model to compute the propagator analytically for an interface moving at arbitrary velocity.

\subsection{Block decomposition of the dynamics}

Define $e_1=\Omega e_0$, which is also normalized. We introduce the orthogonal decomposition $\Psi=\Psi_0 e_0+\Psi_1 e_1+\Psi_+^\perp+\Psi_-^\perp$, where $\Psi_\pm^\perp$ is a function in $L^2(\mathbb{R})$ that is supported only on positive (resp. negative) $p$, and is orthogonal to $e_0$ and $e_1$. The resulting decomposition of the Hilbert space, $\mathcal{H}=\text{span} \{e_0, e_1\}\oplus \{ \Psi{\in} L^2(\mathbb{R}^+)|\Psi{\perp} e_{0,1}\}\oplus  \{ \Psi{\in} L^2(\mathbb{R}^-)|\Psi{\perp} e_{0,1}\}$, allows us to express $\s$ and $\E$ in block form:
\begin{equation}
\s=
\left[
\begin{array}{cc|c|c}
0 & 0 & 0 & 0 \\
0 & \tau^{-1} & 0 & 0 \\ \hline
0 & 0 & \tau^{-1}\mathbb{I} & 0 \\ \hline
0 & 0 & 0 & \tau^{-1}\mathbb{I}
\end{array}
\right] \, , \qquad \qquad \E=
\left[
\begin{array}{cc|c|c}
0 & 1 & 0 & 0 \\
1 & 0 & 0 & 0 \\ \hline
0 & 0 & \mathbb{I} & 0 \\ \hline
0 & 0 & 0 & -\mathbb{I}
\end{array}
\right] \, ,
\end{equation}
where $\mathbb{I}$ is the identity in the appropriate space. We note that the first $2\times 2$ block is exactly the Cattaneo theory studied in section \ref{cattaneoHeatSection}, with $w{=}1$. Hence, the spectrum is a double hyperbola similar to that in figure \ref{fig:CattaneoPlot} (but with asymptotes of slope 1), together with the two lines $i\omega=\tau^{-1}\pm ik$, which are associated with the remaining two sectors. The propagators can therefore be evaluated rather straightforwardly.

\subsection{Explicit determination of the propagator}

Since $w=1$, there are only three relevant cases.

For $v=0$, we have
\begin{equation}
\K(x)=
\left[
\begin{array}{cc|c|c}
1 & -\frac{x}{\tau} & 0 & 0 \\
0 & 1 & 0 & 0 \\ \hline
0 & 0 & e^{-x/\tau}\mathbb{I} & 0 \\ \hline
0 & 0 & 0 & 0
\end{array}\right] \, ,
\end{equation}
which tells us that the particle density $(e_0,\Psi)$ and the particle flux $(e_1,\Psi)$ follow the same profile as in the standard planar diffusion boundary layer: the flux is constant, and the density grows linearly. By contrast, the right-moving part of the microscopic distribution function orthogonal to $e_0$ and $e_1$ relaxes exponentially to zero over a lengthscale $\tau$, while the left-moving part is constrained to vanish.

For $v>0$, the propagator reads
\begin{equation}\label{Kkineticexactblockbowwave}
\K(x-vt)=
\left[
\begin{array}{ccc|c|c}
1 & & \frac{1}{v}\left[e^{-\frac{v(x-vt)}{\tau(1-v^2)}}{-}1 \right] & 0 & 0 \\
0 & & e^{-\frac{v(x-vt)}{\tau(1-v^2)}} & 0 & 0 \\ \hline
0 & & 0 & e^{-\frac{x-vt}{\tau(1-v)}}\mathbb{I} & 0 \\ \hline
0 & & 0 & 0 & 0
\end{array}\right] \, .
\end{equation}
This tells us that there is a hydrodynamic mode, which decays in the comoving frame over the lengthscale $L'=\tau/(\gamma v)$. This lengthscale is macroscopic at small $v$, but becomes microscopic as $v\to1$. The right-moving non-conserved degrees of freedom instead decay over the comoving lengthscale $L'=\gamma(1-v)\tau$, which is always microscopic for positive $v$. As before, the left-moving non-conserved degrees of freedom are constrained to vanish.

Finally, for $v<0$, we have
\begin{equation}
\K(x-vt)=
\left[
\begin{array}{cc|c|c}
1 & -\frac{1}{v} & 0 & 0 \\
0 & 0 & 0 & 0 \\ \hline
0 & 0 & e^{-\frac{x-vt}{\tau(1-v)}}\mathbb{I} & 0 \\ \hline
0 & 0 & 0 & 0
\end{array}\right] \, .
\end{equation}
Hence, there is no hydrodynamic wake, and the flux $(e_1,\Psi)$ is constrained to vanish. There is, however, a non-hydrodynamic tail, which decays over the comoving lengthscale $L'=\gamma(1-v)\tau$. As $v\to-1$, this wake becomes macroscopic in the comoving frame.

\subsection{Application to an explicit setup}

As an illustration, consider an external source traveling at a velocity $v>0$ across the system and releasing along the line $x=vt$ a beam of forward-moving particles that are not thermally distributed. This is a bow-wave problem, and the solution takes the form $\Psi(x-vt)=\K(x-vt)\Phi+Ae_0$,
where $\K$ is given by \eqref{Kkineticexactblockbowwave}, and $\Phi(p)$ is the momentum distribution function of the emitted beam. The counter-term $Ae_0$ simply redefines the background equilibrium state, and one may choose $A$ so that $\Psi(+\infty)=0$. Since the emitted particles all have positive momentum, one has $(e_0,\Phi)=(e_1,\Phi)$, so that
\begin{equation}\label{exactsolutonbowwaverta}
\Psi(\ell)
=
\frac{1}{v}e^{-\gamma v\ell}(e_0,\Phi)(e_0+ve_1)
+
e^{-\gamma(1+v)\ell}
\left[
\Phi-(e_0,\Phi)(e_0+e_1)
\right] \, ,
\end{equation}
where we have introduced the dimensionless comoving distance $\ell=x'/\tau=\gamma(x-vt)/\tau$.

The first term is the hydrodynamic bow wave, which decays over the macroscopic lengthscale $L'=\tau/(\gamma v)$. The second term is the microscopic nonequilibrium component of the emitted beam, which relaxes over the lengthscale $L'=\tau/[\gamma(1+v)]$. The latter is much shorter than the hydrodynamic lengthscale only for small $v$. Figure \ref{fig:BowWaveExample} illustrates the resulting profile.

\begin{figure}[h!]
    \centering
\includegraphics[width=0.4\linewidth]{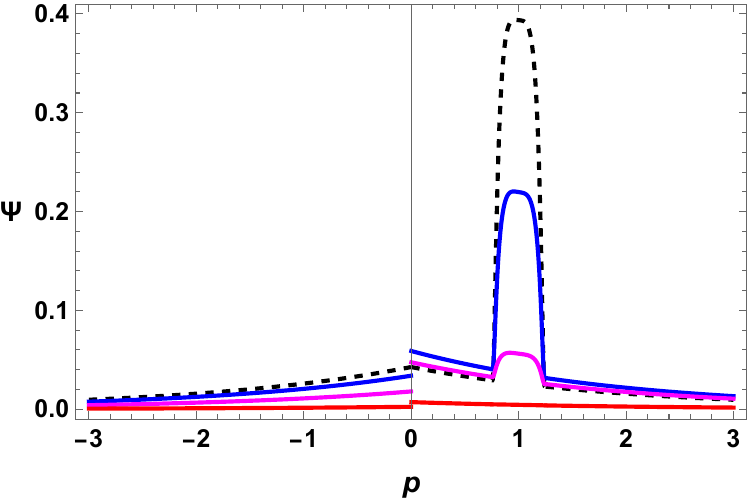}
\hspace{0.5cm}
\includegraphics[width=0.425\linewidth]{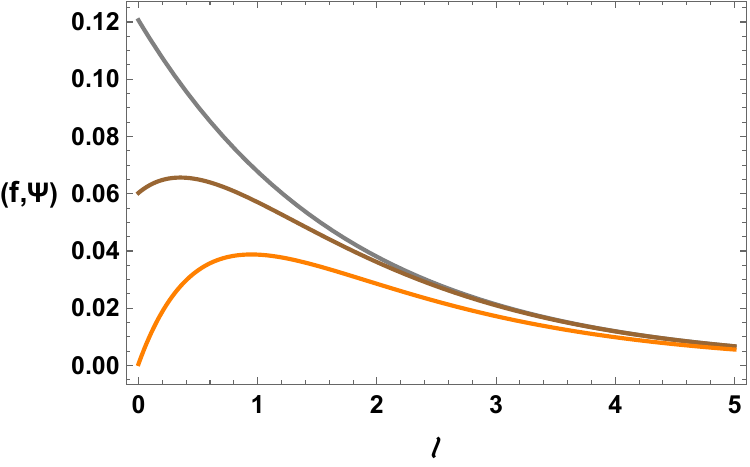}
\caption{Bow wave in a one-dimensional ultrarelativistic gas of probe particles in a Markovian bath with $\beta=1$, modeled within RTA. \textit{Left panel}: Distribution function \eqref{exactsolutonbowwaverta} for $v=1/2$, sampled at the comoving distances $\ell=0$ (dashed), $0.4$ (blue), $1.5$ (magenta), and $5$ (red). The emitted beam is described by a bump function $\Phi$ centered at $\beta p=1$. The profile at $\ell=0$ also includes a term proportional to $e_0$, corresponding to a shift in the definition of the background equilibrium state chosen so that $\Psi(+\infty)=0$. This choice ensures that, in the frame comoving with the interface, the perturbation carries no net particle flux, i.e. $\delta J^{x'}=\gamma(e_1-ve_0,\Psi)=0$. \textit{Right panel}: Profiles of selected moments $(f,\Psi)$ of the distribution function, with $f=e_0$ (gray), $(\beta^2p^2-1)e_0$ (brown), and $\frac{\beta p}{3}(\beta^2p^2-1)e_0$ (orange). The particle density (gray) is purely hydrodynamic, and therefore decays as a single exponential. By contrast, the other two moments receive both hydrodynamic and microscopic contributions, the latter decaying over the shorter kinetic lengthscale. Note that, for $v=1/2$, the hydrodynamic and microscopic lengthscales are comparable.}
    \label{fig:BowWaveExample}
\end{figure}

\section{Conclusions}

We have developed a general framework for studying linear disturbances localized near moving interfaces in relativistic media. The central result is a propagator representation of the interface solution, obtained by Laplace transformation of the equations of motion. This propagator is expressed as a contour integral enclosing the non-positive singularities of the resolvent. Under suitable spectral assumptions, its matrix elements admit a spectral representation in terms of the evanescent excitations of the theory, namely plane waves with imaginary wave number.

This formulation leads to a simple geometric interpretation of interface-localized solutions. For a given interface velocity $v$, the admissible modes are obtained by intersecting the spectrum of the homogeneous medium with the line $i\omega=vik$ in the $\{i\omega,ik\}$ plane, where both coordinates are real (as discussed in \cite{GavassinoLorentzianRelxation:2026seq}). The characteristic decay lengths of the interface solution are in one-to-one correspondence with the $ik$ coordinates of these intersections. The interface problem is thereby reduced to a geometric construction on the relaxation plane, independent of the particular realization of the microscopic dynamics.

We illustrated the formalism with applications ranging from transient hydrodynamics to kinetic theory. In Cattaneo theory, the geometric construction shows that bow waves are governed by hydrodynamic diffusion, whereas hydrodynamic wakes do not exist. In Israel--Stewart theory, hydrodynamic shock tails appear only beyond the sound speed in bow waves, and only below the sound speed in wakes. Moreover, exponentially localized bow waves disappear above the characteristic velocity, implying that Israel--Stewart theory cannot smooth arbitrarily strong shocks (in agreement with \cite{OlsonShock1990}) because its spectrum remains bounded by the characteristic cone. By contrast, the continuous spectrum of kinetic theory extends all the way to the light cone, allowing interface structures of arbitrarily small thickness.

Beyond the specific examples considered here, the construction applies to any linear theory admitting the canonical Onsager form \eqref{Boltzmann} assumed in this work. We therefore expect it to provide a useful framework for analyzing boundary layers, wakes, shock-wave tails, and related interface phenomena across a broad range of relativistic transport theories.

\section*{Acknowledgements}

LG is supported by a MERAC Foundation prize grant,  an Isaac Newton Trust Grant, and funding from the Cambridge Centre for Theoretical Cosmology.

\appendix
\section{Spectral representation of the propagator in infinite dimensions}\label{AAA}

Consider again equation \eqref{Boltzmann}, where $\Psi$ now takes values in an infinite-dimensional Hilbert space $\mathcal{H}$. The operators $\s$ and $\E$ are still self-adjoint, and $\E$ is bounded by causality, in the sense that $||\E||\leq 1$. The relaxation operator $\s$ is still non-negative definite, and is in general unbounded. However, we will also make the additional assumption (which is trivially true in finite dimensions) that the non-hydrodynamic spectrum is gapped, namely that there is a finite positive number $\tau$ (the largest relaxation time) such that $\text{Spectrum}(\s)\subseteq \{0\}\cup [1/\tau,+\infty]$, where $0$ is an eigenvalue of finite multiplicity.

The spectrum of the theory is defined as the set of complex couples $(i\omega,ik)$ such that $\s+ik\E-i\omega$ does not have a bounded inverse. Fixed $v\in [-1,1]$, we then have that the couple $(i\omega,ik)=(sv,s)$ belongs to the spectrum if and only if the operator $\G(s)=\s+(\E-v)s$ does not admit a bounded inverse. We then have the following theorem, which is the infinite-dimensional generalization of Theorem \ref{theoInvert}.

\begin{theorem}\label{theo1infity}
Suppose that there exists no non-zero vector $\Phi\in \mathcal{H}$ such that $\s\Phi=(\E-v)\Phi=0$. Then, $\G(s)^{-1}$ is bounded-holomorphic throughout the complex $s$-plane, except on a subset of the real axis.
\end{theorem}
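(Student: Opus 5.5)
We need to show that for $s\notin\mathbb R$, $\G(s)=\s+(\E-v)s$ (defined on $D(\s)$) has a bounded inverse, and that $s\mapsto\G(s)^{-1}$ is holomorphic in operator norm there. The natural route is to mimic the proof of Theorem~\ref{theoInvert}, replacing "injective implies invertible" with a quantitative lower bound $\|\G(s)\Psi\|\geq c(s)\|\Psi\|$ together with a density argument for the range. Since $\s$ is self-adjoint and non-negative, and $(\E-v)s$ is a bounded perturbation, $\G(s)$ is closed on $D(\s)$. Its adjoint is $\G(s)^\dagger=\G(\bar s)$. Hence, if $\G(s)$ and $\G(\bar s)$ are both bounded below, $\G(s)$ has closed range with trivial cokernel, and is therefore boundedly invertible.

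The key step is the lower bound. For $\Psi\in D(\s)$ with $\|\Psi\|=1$, take the inner product of $\G(s)\Psi$ with $\Psi$. As in Theorem~\ref{theoInvert}, this gives
\[
|\mathrm{Im}(s)|\,|(\Psi,(\E-v)\Psi)|\leq\|\G(s)\Psi\|,\qquad (\Psi,\s\Psi)+\mathrm{Re}(s)(\Psi,(\E-v)\Psi)\leq\|\G(s)\Psi\|
\]
(in absolute value). Suppose, towards a contradiction, that there is a sequence $\Psi_n$ with $\|\Psi_n\|=1$ and $\G(s)\Psi_n\to0$. Then $(\Psi_n,(\E-v)\Psi_n)\to0$ and $(\Psi_n,\s\Psi_n)\to0$. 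Now the spectral gap assumption does the work. Write $P$ for the finite-rank projector onto $\ker\s$. Then $(\Psi_n,\s\Psi_n)\geq\tau^{-1}\|(1-P)\Psi_n\|^2$, so $(1-P)\Psi_n\to0$ and $\Psi_n-P\Psi_n\to0$. Because $P$ has finite rank, we may pass to a subsequence along which $P\Psi_n\to\Phi\in\ker\s$ with $\|\Phi\|=1$, so that $\Psi_n\to\Phi$.

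We then need to pass to the limit in $\G(s)\Psi_n\to0$, where the unbounded $\s$ is an issue. I would handle it by noting that $\s\Psi_n=\G(s)\Psi_n-s(\E-v)\Psi_n\to -s(\E-v)\Phi$. Closedness of $\s$ then gives $\s\Phi=-s(\E-v)\Phi$. But $\s\Phi=0$, and $s\neq0$, so $(\E-v)\Phi=0$. This contradicts the hypothesis. The same argument applies to $\bar s$, which proves bounded invertibility off the real axis.

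Holomorphy follows from the standard resolvent-type identity $\G(s)^{-1}-\G(z)^{-1}=(z-s)\G(s)^{-1}(\E-v)\G(z)^{-1}$. Combined with a Neumann series, it shows that the set where a bounded inverse exists is open, and that the inverse is norm-analytic there (an analytic family of type (A), cf.\ Kato). The main obstacle I expect is the compactness step. It is the only place where the gap, and the finite multiplicity of the eigenvalue $0$, are genuinely needed: without them, an approximate kernel could escape weakly to zero and no contradiction would arise.
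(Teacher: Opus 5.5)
Your proposal is correct and follows essentially the same route as the paper: you get the lower bound on $\G(s)$ by contradiction, using the spectral gap, the finite dimensionality of $\ker\s$ and the closedness of $\s$, and then apply the same bound to the adjoint $\G(\bar s)$ to obtain a dense, closed range. The only difference is cosmetic: you derive holomorphy from the identity $\G(s)^{-1}-\G(z)^{-1}=(z-s)\G(s)^{-1}(\E-v)\G(z)^{-1}$ together with a Neumann series, whereas the paper simply cites Kato's theorem for holomorphic families of type (A).
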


\begin{proof}
The family $\G(s)$ is holomorphic of type (A) in the sense of Kato \cite[\S 7.1.2, Pr.~1.2]{Kato_Perturbation_Theory}. By \cite[\S 7.1.2, Th.~1.3]{Kato_Perturbation_Theory}, whenever $0$ lies in the resolvent set of $\G(s)$, the inverse $\G(s)^{-1}$ exists and is bounded-holomorphic in a neighborhood of $s$. It therefore suffices to show that $s\in\mathbb C\setminus\mathbb R$ implies $0\in\text{Resolvent-set}(\G(s))$.

Fix $s\in\mathbb C\setminus\mathbb R$. We first show that $\G(s)$ is bounded below, namely that there exists a constant $C>0$ such that $||\G(s)\Psi||\ge C||\Psi||$. Suppose by contradiction that there exists a sequence of unit vectors $\Psi_n$ such that $||\G(s)\Psi_n||\to0$. By the Cauchy--Schwarz inequality, $(\Psi_n,\G(s)\Psi_n)\to0$. Taking real and imaginary parts, and using $\mathrm{Im}(s)\neq0$, we obtain $(\Psi_n,(\E-v)\Psi_n)\to0$ and $(\Psi_n,\s\Psi_n)\to0$. Let $\mathbb P$ be the projector onto $\ker(\s)$, and $\mathbb Q=1-\mathbb P$. Since $\s\ge\mathbb Q/\tau$, the condition $(\Psi_n,\s\Psi_n)\to0$ implies that $\mathbb Q\Psi_n\to0$. Moreover, $\mathbb P\Psi_n$ is a sequence of vectors in the sphere $\ker(\s)\cap\{||\Psi||\le1\}$, which is compact because $\ker(\s)$ has finite dimension. Thus, $\Psi_n$ admits a convergent subsequence with limit $\Psi\in\ker(\s)$ and $||\Psi||=1$. Along this subsequence, we have $\s\Psi_n=\G(s)\Psi_n-s(\E-v)\Psi_n\to-s(\E-v)\Psi$, because $\G(s)\Psi_n\to0$ and $\E$ is bounded. Since $\s$ is self-adjoint, it is closed. Hence, $\Psi$ lies in the domain of $\s$, and $\s\Psi=-s(\E-v)\Psi$. But $\Psi\in\ker(\s)$, so $\s\Psi=(\E-v)\Psi=0$, contradicting our assumption.

It remains to prove that $\G(s)$ is surjective. Since $\G(s)^\dagger=\s+s^*(\E-v)$ and $s^*\notin\mathbb R$, the same argument applies to $\G(s)^\dagger$, implying that $\ker\G(s)^\dagger=\{0\}$. Therefore, $\overline{\mathrm{ran}\,\G(s)}=[\ker\G(s)^\dagger]^\perp=\mathcal H$. Finally, $\G(s)$ is closed because it is the sum of the closed operator $\s$ and the bounded operator $s(\E-v)$. Since $\G(s)$ is bounded below, its range is closed. Consequently, $\operatorname{ran}\G(s)$ is both closed and dense in $\mathcal H$, and therefore $\operatorname{ran}\G(s)=\mathcal H$. 

Thus, $\G(s)$ is bijective and $\G(s)^{-1}$ is bounded.
\end{proof}

We also have a second useful result.

\begin{theorem}\label{theo2infity}
Under the assumptions of Theorem~\ref{theo1infity}, the family $\G(s)^{-1}$ is bounded-holomorphic in a region $0<|s|<s_m$, for some $s_m>0$.
\end{theorem}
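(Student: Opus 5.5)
We need to show that $\G(s)^{-1}$ is bounded-holomorphic on a punctured disc $0<|s|<s_m$. By Theorem~\ref{theo1infity} and Kato's result quoted there, it is enough to show that $0$ lies in the resolvent set of $\G(s)$ for every $s$ in that region. Holomorphy is then automatic. The only points not already covered by Theorem~\ref{theo1infity} are the small real $s\neq0$, so the whole task is to rule out singularities accumulating at the origin along the real axis.

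The plan is a Schur-complement (Feshbach) reduction with respect to the splitting $\mathcal H=\ker(\s)\oplus\ker(\s)^\perp$. Let $\mathbb P$ and $\mathbb Q$ be the corresponding projectors, as in the proof of Theorem~\ref{theo1infity}. In block form,
\[
\G(s)=\begin{bmatrix} s\,\mathbb P(\E-v)\mathbb P & s\,\mathbb P\E\mathbb Q\\ s\,\mathbb Q\E\mathbb P & \mathbb Q\s\mathbb Q+s\,\mathbb Q(\E-v)\mathbb Q\end{bmatrix}.
\]
By the gap assumption, $\mathbb Q\s\mathbb Q\ge\tau^{-1}$ on $\operatorname{ran}\mathbb Q$, and $\|s(\E-v)\|\le 2|s|$. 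Hence for $|s|<1/(4\tau)$ the lower-right block $D(s)$ is invertible, by a Neumann series relative to $\mathbb Q\s\mathbb Q$, with $\|D(s)^{-1}\|\le 2\tau$, and $D(s)^{-1}$ is holomorphic in $s$. Invertibility of $\G(s)$ is then equivalent to invertibility of the Schur complement on the finite-dimensional space $\ker(\s)$:
\[
S(s)=s\,\mathbb P(\E-v)\mathbb P-s^2\,\mathbb P\E\mathbb Q\,D(s)^{-1}\mathbb Q\E\mathbb P = s\,M(s),
\]
where
\[
M(s)=\mathbb P(\E-v)\mathbb P-s\,\mathbb P\E\mathbb Q D(s)^{-1}\mathbb Q\E\mathbb P .
\]
Here $M(s)$ is a holomorphic matrix family on the finite-dimensional $\ker(\s)$. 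For $s\neq0$, $\G(s)$ is invertible iff $M(s)$ is, and $\G(s)^{-1}$ is given by the standard block formula, which is bounded and holomorphic wherever $M(s)^{-1}$ is.

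So everything reduces to showing that $\det M(s)$ has no zeros in some punctured disc $0<|s|<s_m$. Since $\det M$ is holomorphic near $0$, its zeros are isolated unless $\det M\equiv0$. The main obstacle is therefore to exclude $\det M(s)\equiv0$. Note that $M(0)=\mathbb P(\E-v)\mathbb P$ may well be singular, as in the Israel--Stewart case at $v=\pm c_s$, so one cannot simply argue from $M(0)$.

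To exclude it, I would use Theorem~\ref{theo1infity} directly. For non-real $s$ with $|s|<1/(4\tau)$, $\G(s)$ is invertible, hence $M(s)$ is invertible, hence $\det M(s)\neq0$. Therefore $\det M\not\equiv0$ on the disc, its zeros are isolated, and there exists $s_m\in(0,1/(4\tau))$ such that $\det M(s)\neq0$ for $0<|s|<s_m$.

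The one point needing care is that the Schur-complement equivalence holds with unbounded $\s$. This is fine: $D(s)$ has domain $\operatorname{ran}\mathbb Q\cap\mathrm{Dom}(\s)$, the off-diagonal blocks are bounded, and $\ker(\s)\subset\mathrm{Dom}(\s)$. The usual block factorization into triangular factors times a block-diagonal operator is then valid as a factorization of closed operators, which gives both the equivalence and the explicit bounded inverse.
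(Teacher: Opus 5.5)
Your proposal is correct and follows essentially the same route as the paper. Both arguments use the Schur-complement reduction onto the finite-dimensional $\ker(\s)$, invert the lower-right block near $s=0$ by a Neumann series, and invoke Theorem~\ref{theo1infity} at non-real $s$ to rule out an identically vanishing determinant, so that its zeros near the origin are isolated. The only difference is cosmetic: you factor out $s$ and study $\det M(s)$, whereas the paper works directly with $\det\mathbb{F}(s)$, which vanishes at $s=0$.
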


\begin{proof}
Since $\mathcal{H}=\ker(\s)\oplus[\ker(\s)]^\perp$, we have the block decomposition
\begin{equation}
\G(s)=
\begin{bmatrix}
s\mathbb{P}(\E-v)\mathbb{P} & s\mathbb{P}(\E-v)\mathbb{Q} \\
s\mathbb{Q}(\E-v)\mathbb{P} & \s+s\mathbb{Q}(\E-v)\mathbb{Q}
\end{bmatrix}
\equiv
\begin{bmatrix}
s\mathbb{A} & s\mathbb{B}^\dagger \\
s\mathbb{B} & \mathbb{D}(s)
\end{bmatrix}\, ,
\end{equation}
where each block acts between the corresponding Hilbert subspaces. On $[\ker(\s)]^\perp$, the restriction of $\s$ is invertible and satisfies $||\s^{-1}||\leq\tau$. Moreover, $||\mathbb{Q}(\E-v)\mathbb{Q}||\leq2$. Therefore, $\mathbb{D}(s)
=
\s\left[1+s\s^{-1}\mathbb{Q}(\E-v)\mathbb{Q}\right]$
is invertible whenever $2\tau|s|<1$, with inverse given by a convergent Neumann series. In particular, $\mathbb{D}(s)^{-1}$ is bounded-holomorphic for $|s|<(2\tau)^{-1}$.
Within this disk, $\G(s)$ is invertible if and only if its Schur complement $\mathbb{F}(s)
=
s\mathbb{A}
-s^2\mathbb{B}^\dagger\mathbb{D}(s)^{-1}\mathbb{B}$
is invertible. Since $\ker(\s)$ is finite-dimensional, this is equivalent to $\det\mathbb{F}(s)\neq0$. The function $\det\mathbb{F}(s)$ is analytic for $|s|<(2\tau)^{-1}$ and vanishes at $s=0$. It cannot vanish identically, because Theorem~\ref{theo1infity} implies that $\G(s)$, and hence $\mathbb{F}(s)$, is invertible whenever $s\notin\mathbb{R}$. Thus, $s=0$ is an isolated zero of $\det\mathbb{F}(s)$. Consequently, there exists $s_m>0$ such that $\G(s)$ is invertible for $0<|s|<s_m$.

Finally, by \cite[\S 7.1.2, Th.~1.3]{Kato_Perturbation_Theory}, $\G(s)^{-1}$ is bounded-holomorphic throughout this punctured disk.
\end{proof}

The above results tell us that we can still define a family of operators of the form \eqref{Kfund}, with the integral over a contour $\Gamma$ as in figure \ref{fig:CloseTheLoop}. Such a contour encloses the origin and no positive singularity, as we can place $\Gamma$ on the left of $s_m$. However, there is now an important difference with respect to the finite-dimensional case, namely that now the singular set of $\G(s)^{-1}$ may extend to $-\infty$, making $\Gamma$ non-compact, and possibly causing convergence issues. In practice, we do not expect this to be a problem because, for $x-vt>0$, the exponential factor $e^{s(x-vt)}$ is likely to suppress the integrand as $s\to -\infty$. Still, to simplify the proof, we shall assume in the theorem below that the singular set of $\G(s)^{-1}$ is bounded.

\begin{theorem}\label{theo3infty}
Suppose that the assumptions of Theorem \ref{theo1infity} hold, and that there exists a compact curve $\Gamma$ that encloses all non-positive part of the spectrum. Then, the family of operators
\begin{equation}
\mathbb{K}(x-vt)=\oint_\Gamma\frac{ds}{2\pi i}\,
e^{s(x-vt)}\,\G(s)^{-1}(\E-v)
\end{equation}
have the following properties:
\begin{itemize}
\item[\textup{(i)}] They are bounded operators, and depend holomorphically on $x-vt$;
\item[\textup{(ii)}] They solve equation \eqref{Boltzmann}, in the sense that $(\partial_t+\s+\E\partial_x)\mathbb{K}=0$;
\item[\textup{(iii)}] The operator $\mathbb{K}(0)$ is a projector, i.e. $\K(0)^2=\K(0)$. It acts as the identity on any solution of the form $\Psi=\Psi(0)e^{s(x-vt)}$ with $s$ non-positive, and annihilates any solution of the same form with $s$ positive.
\end{itemize}
\end{theorem}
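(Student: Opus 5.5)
The plan is to transplant the proof of Theorem~\ref{theoK} to the infinite-dimensional setting. The finite-dimensional ingredients (polynomial determinant, finitely many singularities) are replaced by Theorems~\ref{theo1infity} and~\ref{theo2infity} and by Kato's bounded-holomorphic calculus.

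By assumption, $\Gamma$ is a compact curve lying in the open set where $\G(s)^{-1}$ is bounded-holomorphic. Theorem~\ref{theo1infity} covers everything off the real axis, and Theorem~\ref{theo2infity} guarantees that $\Gamma$ can cross the real axis just to the right of the origin, inside $(0,s_m)$. Hence $s\mapsto\G(s)^{-1}(\E-v)$ is continuous in operator norm on $\Gamma$, and therefore uniformly bounded there, since $\E-v$ is bounded. The Bochner/Riemann integral thus converges in operator norm, and $\K(x-vt)$ is bounded.

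\textbf{Step (i).} For holomorphy, expand $e^{s(x-vt)}=\sum_n s^n(x-vt)^n/n!$, which converges uniformly for $s\in\Gamma$ and $x-vt$ in compact sets. Interchanging sum and integral gives a norm-convergent power series in $x-vt$ with bounded operator coefficients $\oint s^n\G(s)^{-1}(\E-v)\,ds/(2\pi i)$.

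\textbf{Step (ii).} Here lies the main subtlety: $\s$ is unbounded, so one must check that $\K$ maps into $D(\s)$ and that $\s$ can be brought inside the integral. I would handle this by noting that $\G(s)^{-1}$ maps $\mathcal H$ into $D(\G(s))=D(\s)$. On that domain,
\begin{equation*}
\s\,\G(s)^{-1}(\E-v)=(\E-v)-s(\E-v)\G(s)^{-1}(\E-v),
\end{equation*}
and the right-hand side is bounded and continuous on $\Gamma$. Because $\s$ is closed (Hille's theorem for closed operators and Bochner integrals), $\K(x-vt)\Phi\in D(\s)$, and $\s$ commutes with the integral. Differentiating under the integral sign, which is justified by step (i), then yields
\begin{equation*}
(\partial_t+\s+\E\partial_x)\K=\oint_\Gamma \frac{ds}{2\pi i}\,e^{s(x-vt)}\,(\E-v).
\end{equation*}
This vanishes by Cauchy's theorem, since the integrand is entire and $\Gamma$ is closed.

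\textbf{Step (iii).} The projector property follows exactly as in \eqref{squadriamo}–\eqref{squadriamo2}. First nest two contours $\Gamma_{\rm in}\subset\Gamma_{\rm out}$, both within the holomorphy region; this is possible because holomorphy is an open condition. Then verify the identity $(z-s)\G(s)^{-1}(\E-v)\G(z)^{-1}(\E-v)=\G(s)^{-1}(\E-v)-\G(z)^{-1}(\E-v)$ algebraically, using $\G(z)-\G(s)=(z-s)(\E-v)$, valid in the bounded-operator sense. Fubini for norm-continuous integrands allows the orders of integration to be exchanged, after which scalar residue calculus applies to each term. For the eigen-solutions, $\Psi(0)e^{\lambda(x-vt)}$ solving \eqref{Boltzmann} means $\Psi(0)\in D(\s)$ and $\G(\lambda)\Psi(0)=0$. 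It follows that $\G(s)\Psi(0)=(s-\lambda)(\E-v)\Psi(0)$, so $\G(s)^{-1}(\E-v)\Psi(0)=(s-\lambda)^{-1}\Psi(0)$ for $s\in\Gamma$, and the residue theorem gives $1$ or $0$ according to whether $\lambda$ is enclosed. Since $\Gamma$ encloses the non-positive spectrum and lies left of every positive singular point, the result follows, provided $\lambda$ does not sit on $\Gamma$; this is automatic because such $\lambda$ are singular points.
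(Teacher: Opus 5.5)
Your proposal is correct and follows the paper's proof essentially step for step. For (i) you use norm convergence on the compact contour plus the Taylor expansion of $e^{s(x-vt)}$; for (ii) the identity $\s\G(s)^{-1}=1-s(\E-v)\G(s)^{-1}$ lets you bring the unbounded $\s$ inside the integral; for (iii) you use the nested-contour resolvent identity and the eigenvector residue computation. You simply spell out points the paper leaves implicit, namely the closedness (Hille) argument and the use of Theorem~\ref{theo2infity} to let $\Gamma$ cross the positive real axis inside $(0,s_m)$.
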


\begin{proof}
\textup{(i)} Since $\Gamma$ is compact and does not intersect the singular set of $\G(s)^{-1}$, the operator-valued function $\G(s)^{-1}(\E{-}v)$ is bounded-holomorphic in a neighborhood of $\Gamma$. Hence, the contour integral exists in operator norm, and we can proceed as in the proof of point (i) of Theorem \ref{theoK}.

\textup{(ii)} For every $s\in\Gamma$, the operator $\G(s)^{-1}$ maps $\mathcal H$ into $\operatorname{Dom}(\s)$. Furthermore, $\s\G(s)^{-1}=1-s(\E-v)\G(s)^{-1}$, so the integrand is continuous on $\Gamma$ in the graph norm of $\s$. We may therefore differentiate under the integral sign and bring $\s$ inside the integral (the bounded operator $\E$ may also be brought inside). Using $\G(s)=\s{+}s(\E{-}v)$, we obtain
\begin{equation}
(\partial_t+\s+\E\partial_x)\mathbb{K}
=\oint_\Gamma\frac{ds}{2\pi i}\,
e^{s(x-vt)}(\E-v)\, ,
\end{equation}
which vanishes.

\textup{(iii)} The proof here is exactly the same as the proof of point (iii) of Theorem \ref{theoK}. 
\end{proof}

Since $\K(x-vt)$ are bounded-holomorphic, their domain can be extended to the whole Hilbert space, and the matrix elements $(\Phi_1,\K(x-vt)\Phi_2)$ are holomorphic functions of $x-vt$. The matrix element can be brought inside the integral by boundedness, and thus we can write 
\begin{equation}
(\Phi_1,\K(x-vt)\Phi_2)=\oint_\Gamma\frac{ds}{2\pi i}\,
e^{s(x-vt)}\,(\Phi_1,\G(s)^{-1}(\E-v)\Phi_2) \, .
\end{equation}
Now, $(\Phi_1,\G(s)^{-1}(\E-v)\Phi_2) $ is holomorphic away from the singularities of $\G(s)^{-1}$, so we can take the limit in which the upper and the lower side of the contour $\Gamma$ approach the real axis, giving
\begin{equation}
(\Phi_1,\K(x{-}vt)\Phi_2)=\int_{-\infty}^{0^+} ds\,
e^{s(x-vt)}\,\underbrace{\left(\Phi_1,\dfrac{\G(s{-}i0)^{-1}-\G(s{+}i0)^{-1}}{2\pi i}  (\E{-}v)\Phi_2\right)}_{\rho(s)} \, . 
\end{equation}
The spectral density $\rho(s)$ is in general a distribution, which vanishes wherever $\G(s)$ has a bounded inverse, and thus is necessarily supported on the spectrum of the theory. Therefore, the exponential factors appearing in the asymptotic tails of the interface solutions are completely determined by the intersection of the spectrum of the theory with the line $i\omega=vik$, exactly as in the finite-dimensional case.

\section{An invariant longitudinal sector of RTA kinetic theory}\label{BBB}

\subsection{Linearization of RTA}

We consider an ideal gas of massless bosons characterized by a kinetic distribution function $f(x^\mu,p^\alpha)$, which gives the occupation number of the single-particle state with four-momentum $p^\alpha$ at spacetime point $x^\mu$ (with $p^\alpha p_\alpha=0$). Working in the Relaxation-Time Approximation (RTA), the nonlinear equation of motion is
\begin{equation}\label{nonlinearboltzmanncattaneo}
p^\mu \partial_\mu f
=
-\dfrac{u_\mu p^\mu}{\tau}
\left[
f_{\mathrm{eq}}(-\beta_\alpha p^\alpha)-f
\right],
\end{equation}
where $f_{\mathrm{eq}}(z)=(e^z-1)^{-1}$ is the local-equilibrium occupation number.
The inverse-temperature four-vector $\beta^\alpha$ \cite{BecattiniBeta2016,GavassinoTermometri} depends on $f$ through Landau matching (and $u^\alpha\propto \beta^\alpha$, with $u^\alpha u_\alpha=-1$) \cite{OlsonLifsh1990,cercignani_book}. Indeed, the energy-momentum tensor is
\begin{equation}
T^{\mu\nu}
=
\int
\dfrac{g\,d^3p}{(2\pi)^3p^t}
\,p^\mu p^\nu f,
\qquad
(g \text{ the spin degeneracy}) \, .
\end{equation}
Hence, multiplying both sides of \eqref{nonlinearboltzmanncattaneo} by $gp^\nu/p^t$, integrating over all momenta, and imposing $\partial_\mu T^{\mu\nu}=0$, we obtain
\begin{equation}
\int
\dfrac{g\,d^3p}{(2\pi)^3p^t}
(-u_\mu p^\mu)p^\nu
\left[
f_{\mathrm{eq}}(-\beta_\alpha p^\alpha)-f
\right]
=
0  \qquad \Longleftrightarrow \qquad u_\mu (T_\text{eq}^{\mu \nu}-T^{\mu \nu})=0 \, .
\end{equation}

Linearizing around the homogeneous equilibrium state $f_{\mathrm{eq}}(p^t/T)\equiv f_{\mathrm{eq}}$ (with constant temperature $T$), and assuming that all gradients are along the $x$ direction, we obtain
\begin{equation}\label{linearRTA}
\begin{split}
(\partial_t+\Omega\partial_x)\delta f
&=
\tau^{-1}
\left[
f_{\mathrm{eq}}(1+f_{\mathrm{eq}})
p^\alpha\delta\beta_\alpha
-\delta f
\right] \, ,\\
\delta\beta_t
&=
\dfrac{\displaystyle
\int d^3p\,p^t\delta f}
{\displaystyle
\int d^3p\,f_{\mathrm{eq}}(1+f_{\mathrm{eq}})(p^t)^2} \, ,\\
\delta\beta_j
&=
\dfrac{\displaystyle
\int d^3p\,p^j\delta f}
{\displaystyle
\int d^3p\,f_{\mathrm{eq}}(1+f_{\mathrm{eq}})(p^j)^2},
\qquad
(\text{no sum over }j) \, ,
\end{split}
\end{equation}
where $\Omega=p^x/p^t$ denotes the $x$ component of the particle velocity.

Finally, the quadratic perturbation of the free-energy density is
\begin{equation}\label{freenergyRTA}
2\Delta\mathcal F
=
\int
\dfrac{g\,d^3p}{(2\pi)^3}
\dfrac{(\delta f)^2}
{f_{\mathrm{eq}}(1+f_{\mathrm{eq}})\beta} \, .
\end{equation}

\subsection{A useful simplification}

It is immediate to verify that perturbations of the form
\begin{equation}\label{ansatzRTA}
\delta f
=
\sqrt{\dfrac{2\beta^3}{c_v}}
\,f_{\mathrm{eq}}(1+f_{\mathrm{eq}})
p^t
\Psi(\Omega)
\end{equation}
constitute an invariant sector of the linearized equation of motion. Indeed, substituting \eqref{ansatzRTA} into \eqref{linearRTA}, all dependence on $p^t$ cancels, which is only possible because the relaxation time $\tau$ is independent of the particle energy. The ansatz also removes any dependence on the azimuthal angle in the transverse plane, implying $\delta\beta_y=\delta\beta_z=0$, so that all flows are longitudinal.

For real $\Psi$, inserting \eqref{ansatzRTA} into \eqref{freenergyRTA} gives $2\Delta\mathcal F=\int_{-1}^1\Psi(\Omega)^2\,d\Omega$. After complexification, the corresponding Onsager inner product is therefore $(\Psi,\Phi)=\int_{-1}^1\Psi(\Omega)^*\Phi(\Omega)\,d\Omega$, so that $\mathcal H=L^2([-1,1])$. Furthermore, substituting \eqref{ansatzRTA} into \eqref{linearRTA}, we obtain
\begin{equation}
(\partial_t+\Omega\partial_x)\Psi
=
\tau^{-1}
\left[
\dfrac{(1,\Psi)}{(1,1)}
+\Omega\dfrac{(\Omega,\Psi)}{(\Omega,\Omega)}
-\Psi
\right] \, .
\end{equation}

Thus, the velocity operator $\E$ is simply multiplication by $\Omega$, while the relaxation operator is $\s=\tau^{-1}(1-\mathbb P)$, where $\mathbb P$ is the orthogonal projector onto $\operatorname{span}\{1,\Omega\}$. In particular, $\|\E\|=1$, $\ker(\s)=\operatorname{span}\{1,\Omega\}$, and the non-hydrodynamic spectrum consists of the single eigenvalue $\tau^{-1}$. Hence, this model satisfies all the assumptions of Appendix~\ref{AAA}.

\bibliography{Biblio}

\label{lastpage}
\end{document}